\theoremstyle{definition}
\newtheorem{theorem}{Theorem}
\theoremstyle{remark}
\newtheorem{lemma}{Lemma}
\newcommand{\ie}{\emph{i.e.}}
\begin{document}

\newcommand{\re}{\operatorname{Re}}

\title{Optimum Beamforming and Grating Lobe Mitigation for Intelligent Reflecting Surfaces}
\author{Sai Sanjay Narayanan, Uday~K~Khankhoje,~\IEEEmembership{Senior~Member,~IEEE},  Radha Krishna Ganti~\IEEEmembership{Member,~IEEE}%
\thanks{\textit{Corresponding author: Uday Khankhoje (e-mail: uday@ee.iitm.ac.in)}.}%
\thanks{The authors are with the Department of Electrical Engineering, Indian Institute of Technology Madras, Chennai, India}%
\thanks{Authors acknowledge the following research grants: (i)``6G: Sub-THz Wireless Communication with Intelligent Reflecting Surfaces (IRS)," R-23011/3/2022-CC\&BT-MeitY by the Ministry of Electronics and Information Technology (MeitY), Government of India, and (ii) ``Intelligent approaches to designing and building Intelligent Reflecting Surfaces for 6G applications," by the Qualcomm 6G University Research India Program.}}
\maketitle

\begin{abstract}
Ensuring adequate wireless coverage in upcoming communication technologies such as 6G is expected to be challenging. This is because user demands of higher datarate require an increase in carrier frequencies, which in turn reduce the diffraction effects (and hence coverage) in complex multipath environments. Intelligent reflecting surfaces have been proposed as a way of restoring coverage by adaptively reflecting incoming electromagnetic waves in desired directions. This is accomplished by judiciously adding extra phases at different points on the surface. In practice, these extra phases are only available in discrete quantities due to hardware constraints. Computing these extra phases is computationally challenging when they can only be picked from a discrete distribution, and existing approaches for solving this problem were either heuristic or based on evolutionary algorithms. We solve this problem by proposing fast algorithms with provably optimal solutions. Our algorithms have linear complexity, and are presented with rigorous proofs for their optimality. We show that the proposed algorithms exhibit better performance. We analyze situations when unwanted grating lobes arise in the radiation pattern, and discuss mitigation strategies, such as the use of triangular lattices and prephasing techniques, to eliminate them. We also demonstrate how our algorithms can leverage these techniques to deliver optimum beamforming solutions.
\end{abstract}

\begin{IEEEkeywords}
Wireless Networks, Surface Reflectance, Intelligent Reflecting Surface, Passive Beamforming, Phase Shift, optimization, beamforming, grating lobe mitigation
\end{IEEEkeywords}

\IEEEpeerreviewmaketitle

\section{Introduction}
Intelligent reflecting surface (IRS) is proposed as a key technology to enable the advancement of 6G wireless communications, which demands high data rates and a large spectrum of frequencies to operate \cite{9771079}. IRS is a planar surface consisting of many sub-wavelength passive reflecting elements, where each element is equipped with a controller that decides the phase shift it can impart to the incoming beam. By carefully tuning these phase shifts, one can make the reflected beams from each element constructively interfere in a given direction, and destructively interfere in other directions. On account of being a passive surface, an IRS exhibits energy efficiency and improves the spectral efficiency of the system \cite{7510962}.

The required phase shifts of the elements are obtained by solving an appropriate beamforming optimization problem. Several beamforming problems have been posed in the IRS literature \cite{elbir2023twenty}. Minimum variance beamforming (MVB) involves selecting the weights so as to minimize the variance of the output signal's energy in a given direction while maintaining unity gain in that direction \cite{5447076, 1420809_boyd_mvb, 8721535_vorobyov_mvdr}. In transmit beamforming and broadcast beamforming, the goal is to minimize transmit power while maintaining a signal-to-noise ratio (SNR) threshold for each user \cite{1634819, 6952703, 8811733_irs_enhanced_joint_beamforming}. The above beamforming problems are typically framed as continuous optimization problems, where the weights (and hence the phase shifts) belong to a continuous set of values. However, continuous phase shifters are difficult to implement on the IRS due to the limited size of each element and associated hardware complexity. In practice, each element has a discrete set of phase shifts (e.g., 1-bit or 2-bit), which gives rise to discrete beamforming problems \cite{9110889_hybrid_disc_beamforming, 9133142_channel_estimation, 9305278_iot_beamforming}. While 1-bit phase shifters in particular are easy to implement, the price to pay is the emergence of quantization lobes, which we refer to as the grating lobes induced due to the 1-bit nature of each weight. A common technique to mitigate these lobes (called prephasing) is to perturb the baseline phases of some or all elements by some predetermined strategy \cite{yin_single-beam_2020,kashyap_mitigating_2020,9807634_prephase_1-bit_low-sll,10195171_quantization_suppression,10179250_prephase_low_res_hybrid_phasing}, an idea that has evolved from mitigation strategies in phased antenna arrays \cite{smith1983comparison} and reconfigurable reflectarray antennas \cite{yang2017study}. 

Discrete optimization problems are generally hard to solve efficiently. However, for certain classes of problems there exist techniques that yield approximate solutions. In the context of beamforming, two commonly used methods for solving discrete optimization problems are semi-definite relaxation (SDR), which involves relaxing the problem into a convex optimization problem that can be efficiently solved using interior point methods \cite{zhang_quadratic_2000, ben-tal_lectures_2001}, and quantization, which involves solving the continuous version of the problem and then discretizing the solution. Quantization in particular has been used and studied extensively \cite{4476079_no_of_bits, 5068400_linear_reflectarray, 7389996_hybrid_large_scale, yang2017study, zhang_reconfigurable_2020}. The problem with these methods is that SDR has a high computational complexity, while quantization is only a heuristic method that does not provide any mathematical guarantees of optimality. The beamforming problem is posed as a mixed integer nonlinear program in \cite{wu2019beamforming} and an optimal algorithm is proposed; however, this has exponential worst-case complexity due to which a suboptimal relaxation is proposed via a successive refinement algorithm. Machine learning \cite{yan2022machine} and quantum algorithms \cite{quantum2024} have also been explored for IRS beamforming. A recent review \cite{pan2022overview} comprehensively covers the various strategies employed in the literature for continuous and discrete variable beamforming as well as related issues such as channel estimation, transmission design, and localization.

More recently, a linear complexity algorithm was proposed \cite{zhang2022configuring,ren2022linear} for solving the discrete beamforming problem of maximizing 
the power of the reflected signal in a given direction when a plane wave impinges on the IRS; while optimal, it assumes the IRS elements to have unit reflectivity and the phase shifts to be equiangular. Our work lifts the above assumptions and proposes a series of new algorithms that are both, provably optimal, as well as linear in computational complexity. Building on our previous work \cite{beamforming_conf_2024}, we first present a novel $\mathcal{O}(n)$ complexity algorithm for the 1-bit case where all phase shifts of the individual elements are either $0^\circ$ or $180^\circ$ (relatively). This is already a vast improvement over heuristic techniques or evolutionary algorithms. The 1-bit algorithm is then generalized to the case where each weight is constrained to its own \textit{arbitrary} discrete two-element set, i.e.~the elements of this set need not have unit modulus, nor a phase difference of $180^\circ$. We further generalize the algorithm to the case where the weights are constrained to a $k$-element set ($k > 2$), and the generalization is once again proven to be efficient and optimal. Finally, we propose yet another generalization for the case where beamforming in multiple directions is desired. To the best of our knowledge, ours is the first provably optimal, linear complexity algorithm to solve the beamforming problem under the constraint of arbitrary discrete weights. Using our algorithms to obtain the desired weights, we perform theoretical analyses and simulations for the behavior of sidelobe levels (SLL) for the 1-bit case. We identify the emergence of grating lobes due to a symmetry in the IRS array factor. To this end, we propose two mitigation strategies. In the first approach we propose the use of a triangular lattice to realize the IRS and show supporting numerical results. In the second approach, we demonstrate how our algorithm can be used to incorporate the various grating-lobe mitigation strategies proposed in earlier works.

The paper is structured as follows. In Section II, we formulate the problem statement and present the novel beamforming algorithm along with its generalization to the $k-$element and multiple beams case. In Section III, we present numerical results to establish the superiority of our approach over heuristic approaches. Using our generalized algorithm, we study the trade-off between the number of bits of allowed phase shifts and beamforming error. In Section IV, we present theoretical analyses and detailed results pertaining to sidelobe levels and identify the issue of grating lobes. This is followed by a discussion on grating lobe mitigation strategies and their integration with our algorithm. Finally, we conclude in Section V. 

\section{Optimum beamforming algorithm \& generalizations}
\subsection{Problem Statement}

The IRS is modeled as an $M \times N$ rectangular grid of equi-spaced unit cells in the $x-y$ plane. It is illumined by an incident electromagnetic plane wave whose coordinates according to the forward scatter alignment (FSA) convention \cite{elachi1990radar} are given by $(\theta_{in}, \phi_{in})$, as seen in Fig.~(\ref{fig:IRS_diagram}). As per the FSA convention, the incident beam wavevector is given by $\hat{k}_{in} = \sin{\theta_{in}} \cos{\phi_{in}} \hat{x} + \sin{\theta_{in} \sin{\phi_{in}} \hat{y}} - \cos{\theta_{in}} \hat{z}$, while the reflected wavevector in the direction $(\theta_{0}, \phi_{0})$ is given by $\hat{k}_{0} = \sin{\theta_{0}} \cos{\phi_{0}} \hat{x} + \sin{\theta_{0} \sin{\phi_{0}} \hat{y}} + \cos{\theta_{0}} \hat{z}$ (note the difference in the sign of the $\hat{z}$ component between incident and reflected wavevectors). Each cell imparts a complex valued reflection coefficient (called the cell  \emph{weight}),  $w_{m,n}$, to the incident beam. In order to capture realistic IRS implementations we assume that the weights take on discrete values, and therefore are modeled as belonging to a discrete alphabet, $\mathcal{S}$ (for e.g.,~in many practical implementations, $\mathcal{S} =\{-1,1\}$). 

The normalized array factor of the IRS, $G$, is given by:
\begin{align}
 \label{afac}
    G(\theta, \phi) &= \frac{1}{MN}\sum_{m = 1}^M \sum_{n = 1}^N w_{m,n} e^{j \varphi_{m,n}(\theta, \phi)},
\end{align}
where
\begin{align}
      \label{phi_mn}
 \varphi_{m,n}(\theta, \phi) &= \frac{2 \pi d}{\lambda} \Big( - m \sin{\theta}\cos{\phi} - n \sin{\theta} \sin{\phi} \\   \nonumber
    &{}  + m \sin{\theta_{in}} \cos{\phi_{in}} + n \sin{\theta_{in}} \sin{\phi_{in}} \Big ),
\end{align}
with $d$ as the inter-cell spacing, and $\lambda$ being the wavelength of the incident wave. The array factor gives the far-zone electric field of the array when multiplied by the radiation pattern of the constitutive array cell \cite{balanis2016antenna} (ignoring mutual coupling).

\begin{figure}[htb!]
    \centering
    \includegraphics[width = 0.36 \textwidth]{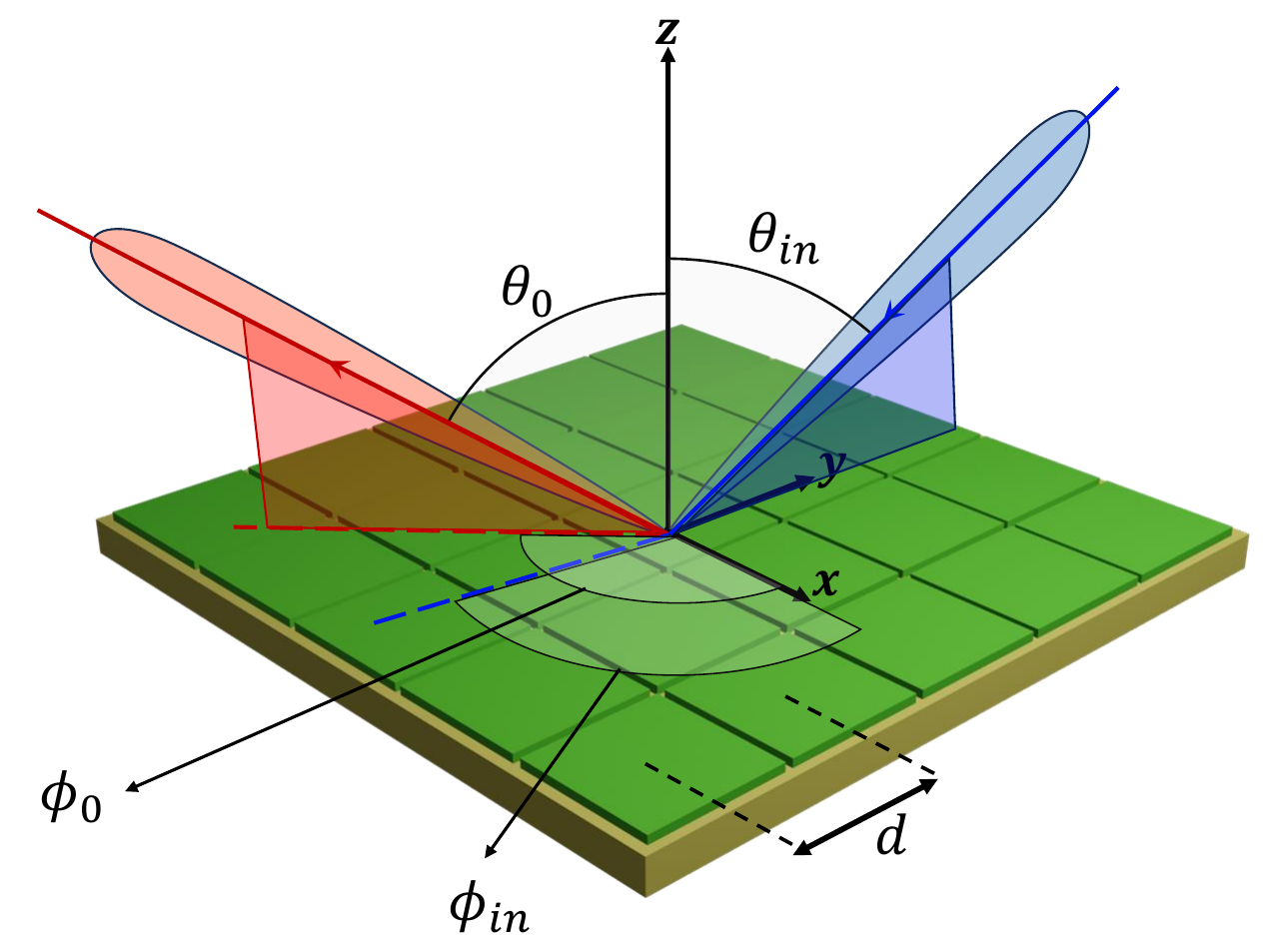}
    \caption{A schematic diagram of a $5\times 5$ IRS, with the incident beam wavevector (FSA coordinates $(\theta_{in}, \phi_{in} $)) represented in blue and the reflected beam wavevector (FSA coordinates ($\theta_0, \phi_0$)) represented in red. }
    \label{fig:IRS_diagram}
\end{figure}

We now formally state the problem of IRS beam forming in a certain direction $(\theta_0, \phi_0)$ as:
\begin{align} \label{disc_opt_1}
    \nonumber \max_{w_{m,n}} \quad &|G(\theta_0, \phi_0)|^2, \\
    \textrm{s.t.} \quad &w_{m,n} \in \mathcal{S},\hspace{4mm} \forall m,n.
\end{align}
This is a discrete optimization problem since the weights $w_{m,n}$ belong to a finite set. We note that the problem of beamforming can take on many more sophisticated forms as compared to \eqref{disc_opt_1} above (see \cite{elbir2023twenty} for an extensive review); in this work we focus on the simpler, single beam, noise-free analysis.

\subsection{Thresholding Method}
Thresholding (also known as quantization) is the most common discrete beamforming algorithm used in the literature. Consider the case where the allowed values for each cell are: $\mathcal{S} = \{1, -1\}$, \ie,~a 1-bit IRS, which corresponds to each unit cell having phase shift of $0^\circ$ or $180^\circ$. 

In the thresholding method, the main idea is to first obtain the optimal solution to the continuous version of the problem (\ie,~solving \eqref{disc_opt_1} with the constraint as $w_{m,n} \in \mathbb{C}$, $\lvert w_{m,n} \rvert = 1$), as done in \cite[Thm.~1.2.5]{bhattacharyya2006phased}. The solution is unique up to multiplication by a unit-magnitude complex number. Once the continuous solution is obtained, we discretize the weights by mapping them to whichever among $\{1,-1\}$ they are closest to. The complete methodology is presented in Algorithm 1.

\begin{algorithm}[htb]
    \caption{Thresholding Algorithm} \label{th_algo}
    \begin{algorithmic}[1]
        \Require $\theta_{in}, \phi_{in}, \theta_{0}, \phi_{0}$
        \For{$1 \leq m \leq M$, $1 \leq n \leq N$}
        \State Compute $w_{m,n} =  \exp{\big(-j \varphi_{m,n} (\theta_0, \phi_0) \big)}$
        \If{$\angle w_{m,n} \in [-\frac{\pi}{2}, \frac{\pi}{2})$}
            \State $\widetilde{w}_{m,n} \gets 1$
        \Else
            \State $\widetilde{w}_{m,n} \gets -1$
        \EndIf
        \EndFor
        \State \Return discrete weights $\widetilde{w}_{m,n}$
    \end{algorithmic}
\end{algorithm}

The thresholding method is popular on account of being intuitive and easy to implement \cite{zhang_reconfigurable_2020,yang2017study}. However, due to its heuristic nature it does not have any guarantees of optimality. Further, it is not clear how to generalize the algorithm if the elements of $\mathcal{S}$ have unequal amplitude. These observations pave the way for our proposed method, which we present next.

\subsection{Optimal Partitioning Algorithm}
The proposed method is termed the Optimal Partitioning Algorithm (OPA), which is based on the following intuition. The sum of complex exponentials in \eqref{afac} can be re-written as $G = \sum_i w_i z_i$, where the index $i$ runs over all tuples $(m,n)$, $w_i$ represents the weight of the $i^{th}$ unit cell, and $z_i = e^{j \varphi_i(\theta, \phi)}$. To build an intuitive understanding of the algorithm about to be presented, we consider the following two examples.

(i) Consider the $z_i$'s as shown in Fig.~\ref{fig:unit_circ}(a). To maximize $|G|$ as per \eqref{disc_opt_1}, the weight $w_i$ associated with each $z_i$ must be chosen appropriately. Since the $w_i$s are restricted to $\{-1,1\}$, it follows that $|G|$  is maximized when all the $z_i$s in the left-half of the plane are assigned $w_i=-1$ (indicated via the red arrows in the Figure) and those in the right-half plane are assigned $w_i=1$ (or vice-versa). In effect, we have found a partitioning line (in this case the $y$-axis) that separates the assignment of the weights. The resulting weight assignment is equivalent to the thresholding algorithm.\\
(ii) Next consider the $z_i$'s as shown in as shown in Fig.~\ref{fig:unit_circ}(b). As can be seen, the $y$-axis no longer partitions the weights for maximizing $|G|$; instead, a different weight assignment gives the optimal sum, as shown by the red arrow in the Figure. Thus, the thresholding algorithm would have given a suboptimal weight assignment. 

We note that in both examples, the optimal assignment can be conceived of as the result of an optimal partitioning line as indicated by the dashed blue lines in Fig.~\ref{fig:unit_circ}. 
\begin{figure}[htbp]
\begin{center}
\includegraphics[width = 0.45\textwidth]{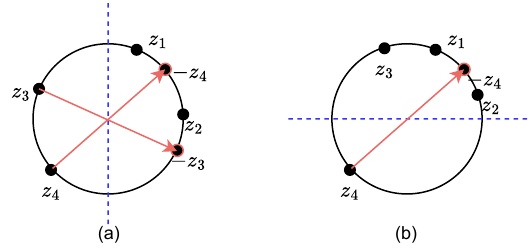}
\caption{Complex argand plane with $z_i$s on the unit circle; in (a) the assignment of weights $-1$ to $z_3,z_4$ maximizes $|G|$, while in (b) the assignment of weight $-1$ to $z_4$ maximizes $|G|$. The dashed lines indicate the optimal partitioning for weight assignments in each case.}
\label{fig:unit_circ}
\end{center}
\end{figure}

The above intuition of an optimal partitioning is formalized via the following theorem.

\begin{theorem}
    Let $z_1, \dots z_n$ be arbitrary non-zero complex numbers, and let $\widetilde{w}_1, \dots \widetilde{w}_n$ be an optimal solution to the optimization problem
    \begin{align} \label{thm1}
        \max_{w_i} \quad &\big| w_1 z_1 + \dots + w_n z_n \big|, \\ \nonumber
        \textrm{s.t.} \quad &w_i \in \{1,-1\}, \hspace{3mm} \forall i = 1, \dots n.
    \end{align}
    Then, there exists a line $\ell$ passing through the origin with the following properties:
    \begin{enumerate}
        \item $\ell$ does not contain any $z_i$.
        \item $\widetilde{w}_i = 1$ for all $z_i$ lying on one side of the line, and $\widetilde{w}_i = -1$ for all $z_i$ on the other side. 
    \end{enumerate}
\end{theorem}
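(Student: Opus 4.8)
The plan is not to construct the line directly but to read it off from the optimal sum. Write $S = \widetilde w_1 z_1 + \dots + \widetilde w_n z_n$ for the given optimal solution. First I would check that the optimum is strictly positive, hence $S \neq 0$: choosing each $w_i \in \{1,-1\}$ so that $w_i\,\re(\bar u_0 z_i) \ge 0$, with $u_0 = z_1/|z_1|$, already gives $|w_1 z_1 + \dots + w_n z_n| \ge \re\!\big(\bar u_0 \sum_i w_i z_i\big) = \sum_i \big|\re(\bar u_0 z_i)\big| \ge |z_1| > 0$. So I may set $u = S/|S|$, the unit vector pointing along $S$, and record the identity $|S| = \re(\bar u S) = \sum_i \widetilde w_i a_i$, where $a_i := \re(\bar u z_i)$ is the signed projection of $z_i$ onto $u$.

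Next I would compare $\widetilde w$ against the ``sign'' assignment $w_i^{\star} := \operatorname{sign}(a_i)$ for $a_i \neq 0$ (with $w_i^{\star}$ chosen arbitrarily when $a_i = 0$), whose sum I call $S^{\star}$. On the one hand $|S^{\star}| \ge \re(\bar u S^{\star}) = \sum_i |a_i|$; on the other hand $\sum_i |a_i| \ge \sum_i \widetilde w_i a_i = |S|$. Since $|S|$ is the maximum of \eqref{thm1}, we have $|S^{\star}| \le |S|$, so every inequality in $|S| \le \sum_i |a_i| \le |S^{\star}| \le |S|$ is an equality. From $\sum_i |a_i| = \sum_i \widetilde w_i a_i$ and the termwise bound $\widetilde w_i a_i \le |a_i|$ I conclude $\widetilde w_i = \operatorname{sign}(a_i)$ for every index $i$ with $a_i \neq 0$. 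Taking $\ell := \{\, z : \re(\bar u z) = 0 \,\}$, this is exactly property (2): the $z_i$ with $a_i > 0$ receive weight $+1$ and those with $a_i < 0$ receive weight $-1$, so the two weight classes lie in the two open half-planes bounded by $\ell$.

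It remains to establish property (1), that $Z := \{\, i : a_i = 0 \,\}$ is empty, and this is the step needing the most care. Here I would exploit that the comparison above is insensitive to the choice of $w_i^{\star}$ on $Z$: for \emph{any} such choice, $\re(\bar u S^{\star}) = \sum_i w_i^{\star} a_i$ is still $\sum_i |a_i| = |S|$, so by optimality $|S^{\star}| = |S| = \re(\bar u S^{\star})$, which forces $S^{\star}$ to be a positive real multiple of $u$, hence $\operatorname{Im}(\bar u S^{\star}) = \sum_i w_i^{\star}\operatorname{Im}(\bar u z_i) = 0$. Flipping $w_{i_0}^{\star}$ for a single $i_0 \in Z$ leaves $Z$ and all the $a_i$ unchanged but alters this last sum by $\pm 2\operatorname{Im}(\bar u z_{i_0})$, so $\operatorname{Im}(\bar u z_{i_0}) = 0$; combined with $\re(\bar u z_{i_0}) = a_{i_0} = 0$ this gives $z_{i_0} = 0$, contradicting non-nullity. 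Hence $Z = \emptyset$, $\ell$ avoids every $z_i$, and $\widetilde w_i = \operatorname{sign}(\re(\bar u z_i))$ is a partition of the required form. The only genuinely nontrivial ingredient is this last non-degeneracy argument; everything else is two applications of $|S| \ge \re(\bar u S)$, once with equality to define $u$ and once to bound $S^{\star}$.
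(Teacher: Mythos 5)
Your proof is correct, and it arrives at the same separating line as the paper --- the line through the origin perpendicular to the optimal sum $S=\sum_i \widetilde w_i z_i$ --- but by a genuinely different argument. The paper's proof is a local exchange argument: it isolates a one-line lemma ($|a+b|^2>|a|^2$ whenever $b\neq 0$ and $\re(ab^*)\ge 0$) and shows that flipping the single weight of any $z_j$ lying on $\ell$ or on the ``wrong'' side of $\ell$ strictly increases $|S|$, contradicting optimality; this disposes of properties (1) and (2) uniformly, and the same lemma is reused verbatim in the proof of Theorem 2. You instead make one global comparison with the sign assignment $w_i^\star=\operatorname{sign}\bigl(\re(\bar u z_i)\bigr)$ and extract property (2) from the equality cases in the chain $|S|\le\sum_i|a_i|\le|S^\star|\le|S|$, then obtain property (1) from a separate non-degeneracy argument via the vanishing of $\operatorname{Im}(\bar u S^\star)$ under flips on $Z$. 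Both routes are valid; yours has the merit of explicitly establishing that the optimum is strictly positive (so $S\neq 0$ and $\ell$ is well defined), a point the paper leaves implicit, while the paper's single-flip lemma is shorter and transfers more directly to the $k$-element constraint sets of Theorem 2. Note also that your property (1) step, which you flag as the delicate one, admits the paper's shortcut: if $\re(\bar u z_{i_0})=0$ then $\re(S z_{i_0}^*)=0$, so flipping $\widetilde w_{i_0}$ replaces $S$ by $S-2\widetilde w_{i_0}z_{i_0}$ with $|S-2\widetilde w_{i_0}z_{i_0}|^2=|S|^2+4|z_{i_0}|^2>|S|^2$, contradicting optimality directly and bypassing the imaginary-part analysis.
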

\begin{proof}
    The proof is presented in Appendix \ref{theorem 1 proof}.
\end{proof}

The above theorem asserts that \emph{if} an optimal set of weights exists, then there exists a \emph{separating line} passing through the origin, which partitions the set $\{z_1, \dots z_n \}$ into two sets $\mathcal{A}$ and $\mathcal{B}$, where $\widetilde{w}_i = 1$ for each $z_i \in \mathcal{A}$, and $\widetilde{w}_i = -1$ for each $z_i \in \mathcal{B}$. However, we know that an optimal solution exists because the feasible set is finite. Therefore, an optimal separating line definitely exists, and our goal should be to find one such line. Once the partition $\{ \mathcal{A}, \mathcal{B}\}$ is obtained, the assignment of $1$ and $-1$ to each set is arbitrary, as replacing $\widetilde{w}_i$ with $-\widetilde{w}_i$ for each $i = 1, \dots n$ will yield the same value of $\big| \sum_i w_i z_i \big|$. 

The natural question that arises is: how many partitions should we check? We must check all partitions that arise due to separating lines passing through the origin. However, multiple separating lines can lead to the same partition, as shown in Fig.~\ref{fig:separating_line}. In particular, one has the freedom to rotate a separating line anticlockwise about the origin without changing the partition, until it hits some $z_i$. Therefore, the partition obtained by \emph{any} separating line is equivalent to one obtained by a separating line whose angle with the $x$ axis is infinitesimally lesser than the argument of some $z_i$, and it can be easily seen that one of the subsets of the partition must be of the form $ \{ z_j \mid \arg z_i \leq \arg z_j < \arg z_i + \pi \}$ for some $i$ (see Fig.~\ref{fig:separating_line}).
Since there are $n$ complex numbers $z_i$, only $n$ such separating lines (and therefore only $n$ such partitions) need to be investigated. So, our optimal partitioning seeking algorithm has linear time complexity.
\begin{figure}[htb] 
    \centering
    \includegraphics[width =0.23\textwidth]{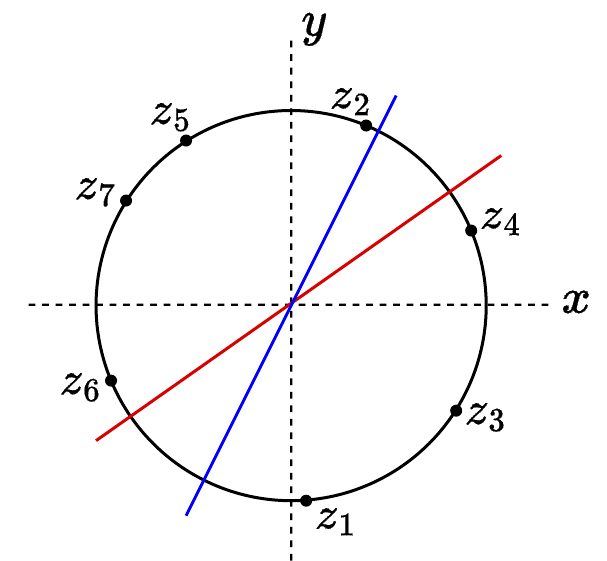}
    \caption{An example where $n = 7$ and $z_1, \dots z_7$ all lie on the unit circle. The red and blue separating lines both give rise to the same partitions, $\mathcal{A} = \{ z_1, z_3, z_4\}$ and $\mathcal{B} = \{ z_2, z_5, z_6, z_7 \}$. Further, the set $\{z_j \big| \arg{z_2} \leq \arg{z_j} < \arg{z_2} + \pi \}$ is precisely $\mathcal{B}$. The blue line can be thought of as having angle infinitesimally lesser than $\arg{z_2}$.}
    \label{fig:separating_line}
\end{figure}

Theorem 1 can now be used to formulate the Optimal Partitioning Algorithm (OPA), which computes the optimal weights for the optimization problem in \eqref{disc_opt_1} with $\mathcal{S}=\{-1,1\}$. The algorithm is as follows: first, compute the $\{ z_i \}$ using \eqref{phi_mn}. Then, choose any arbitrary line passing through the origin, and obtain all possible partitions of $\{z_i\}$ by rotating the line anticlockwise about the origin. When rotating the line, the partition can only change when the line crosses some $z_i$, and hence we will obtain at most $n = MN$ unique partitions. Next, for each partition, assign $w_i = 1$ for all $z_i$ on one side of the partition, and $w_i = -1$ for all $z_i$ on the other side. Compute $|\sum_i w_i z_i |$ for each partition, and choose the partition that corresponds to the largest value of $|\sum_i w_i z_i |$. Theorem 1 guarantees that the weight assignment corresponding to the resulting partition is optimal.

The complete method is displayed in Algorithm 2; since we only need to check $n = MN$ partitions, OPA will find the optimal weights in time $\mathcal{O}(MN)$. The optimality of the obtained solution is preserved up to a change of sign, \ie,~if $\{\widetilde{w}_i\}$ are optimal, then so are $\{-\widetilde{w}_i\}$, since both sets give the same value of $|G|$.

\begin{algorithm}[htb!]
    \caption{Optimal Partitioning Algorithm (OPA)} \label{non-sdp-1,-1}
    \begin{algorithmic}[1]
        \Require Non-zero complex numbers $z_1, \dots z_n$ 
        \State max $\gets 0$ 
        \For{$i = 1$ to $n$}
        {
            \State Compute $\delta_i = \arg{z_i}$. 
            \State Set $w_j = 1$ for all $j$ such that $\arg{z_j} \in [\delta_i, \delta_i + \pi)$ and $w_j = -1$ for others (\textit{corresponds to assigning opposite weights to either side of the partition})
            \State Compute $g = \big| w_1 z_1 + \dots w_N z_N \big|$.
            \If{$g > $  max}
                \State max $\gets g$
                \State optimal weights $\gets \begin{bmatrix}
                    w_1 & \dots & w_n
                \end{bmatrix}$.
            \EndIf
        }
        \EndFor
        \State \Return optimal weights
    \end{algorithmic}
\end{algorithm}

We now discuss a generalization of the above algorithm for the case where each weight is constrained to its own binary set, \ie, $w_i \in \{ a_i, b_i\}$ for all $i = 1, \dots n$. In order to solve the resulting optimization problem, we introduce the new scaled variables $y_i$, such that $ w_i = \big( \frac{a_i + b_i}{2} \big) + y_i \big( \frac{a_i - b_i}{2} \big)$, and rewrite the sum in terms of the variables $y_i$. Since the constraints $w_i \in \{ a_i, b_i \}$ are mapped to $y_i \in \{1,-1\}$, OPA can be used to obtain the optimal solution $\widetilde{y}_i$, which in turn is used to compute the optimal weights $\widetilde{w}_i$. This generalization is termed as Generalized OPA (gOPA), and the full details are displayed in Algorithm~\ref{non-SDP ai bi}.

\begin{algorithm}[htb]
    
    \caption{Generalized OPA (gOPA)} \label{non-SDP ai bi}
    \begin{algorithmic}[1]
        \Require Non-zero complex numbers $z_1, \dots z_n$, and the constraint sets $\{a_1, b_1\}, \dots \{a_n, b_n\}$ for each of the weights (where $a_i \neq b_i$ for all $i$). 
        \vspace{2mm}
        \State Compute $z_1', \dots z_n', z_{n+1}'$ as:
        \begin{align*}
             &z_i' = \Big( \frac{a_i - b_i}{2} \Big)z_i \hspace{4mm} \forall i = 1, \dots n, \\
             &z_{n+1}' = \sum_{i=1}^n \Big( \frac{a_i + b_i}{2} \Big)z_i.
        \end{align*}
        \State Apply OPA to the set $z_1', \dots z_{n+1}'$, and get optimal weights $\begin{bmatrix}
            \widetilde{y}_1 & \dots & \widetilde{y}_n & \widetilde{y}_{n+1}
        \end{bmatrix}$.
        \If{$\widetilde{y}_{n+1} = -1$}
                \For{$i = 1$ to $n$}
                    \State $\widetilde{y}_i \gets - \widetilde{y}_i$.
                \EndFor
        \EndIf
        \State Compute the optimal weights $\widetilde{w}_1, \dots \widetilde{w}_n$ as $$\widetilde{w}_i = \Big( \frac{a_i + b_i}{2} \Big) + \widetilde{y}_i \Big( \frac{a_i - b_i}{2} \Big).$$
        \State \Return optimal weights $\begin{bmatrix}
            \widetilde{w}_1 & \dots & \widetilde{w}_n
        \end{bmatrix}$.
    \end{algorithmic}
\end{algorithm}

\subsection{$k-$element Generalization}
We now present yet another generalization of OPA called kOPA, where each weight is constrained to a $k-$element set with $k > 2$, \ie,~$w_{m,n} \in \mathcal{S} = \{a_1, \dots a_k \}$ for all $m,n$, and $a_1, \dots a_k$ are arbitrary complex numbers. 

As in the previous case, it turns out that the optimal solution is once again obtained by partitioning the $z_i$'s into $k$ subsets with a specific geometric structure, where each $z_i$ in a given subset gets the same weight. In OPA, we saw that the optimal partition arises from considering a separating line, which can be thought of as two diametrically opposite rays emerging from the origin. A similar intuition carries forward to kOPA, where the optimal partition arises due to a configuration of $k$ such rays. 

Let $\mathcal{S} = \{a_1, \dots a_k \}$ be the discrete weight set from which the weights $w_{m,n}$ have to be picked. We then define the following regions:
\begin{align}
    \label{A_i}
    A_i = \{ z \in \mathbb{C} \mid \operatorname{Re}( (a_i - a_j)z ) > 0 ,\forall j \neq i \}.
\end{align}
For a given $j$, the inequality $\operatorname{Re}( (a_i - a_j)z ) > 0 $ represents a half-space in $\mathbb{C}$ (or equivalently $\mathbb{R}^2$), and hence $A_i$ is an intersection of $k-1$ half-spaces. Therefore, the collection $ \{A_1, \dots A_k \}$ can be easily seen as a radial partition $\mathcal{P}$ of the complex plane with  $A_i$'s as the cones of the radial partition. Let  $\theta \in [0, 2\pi]$, and let $\mathcal{P}(\theta)$ denote the rotation of the partition $\mathcal{P} $ by an angle $\theta$ anticlockwise, with the corresponding rotated cones denoted by $A_i(\theta)$. An illustration of $\mathcal{P}$ and $\mathcal{P}(\theta)$ is provided in Fig~\ref{fig:kopa_partition_and_rotation}.

\begin{figure}[htb!]
    \centering
    \includegraphics[width = 0.5 \textwidth]{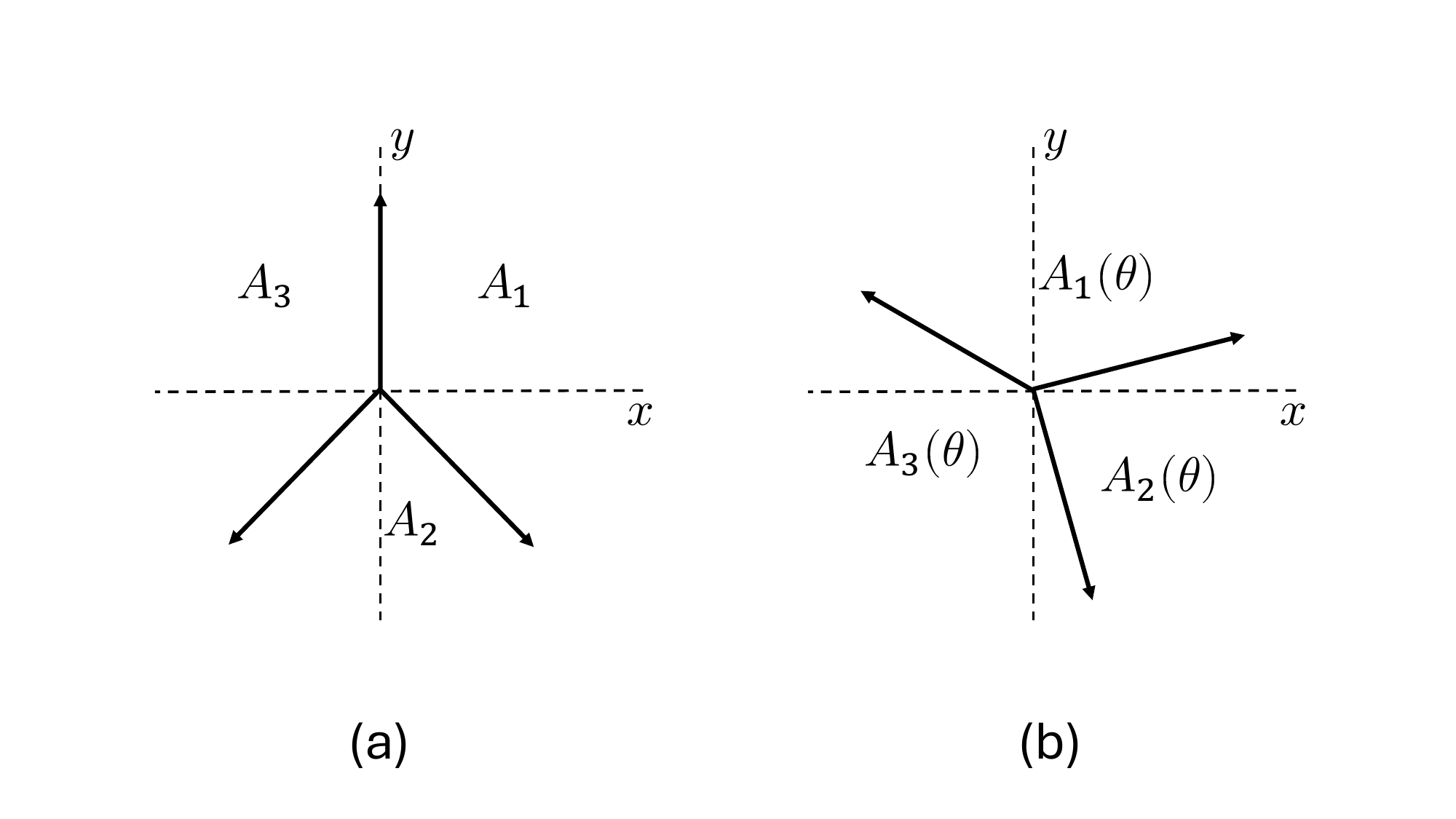}
    \caption{(a) Depiction of $\mathcal{P} = \{A_1, A_2, A_3 \}$ for the constraint set $\mathcal{S} = \{1, j, -1\}$ (b) Rotated partition $\mathcal{P}(\theta) = \{ A_1(\theta), A_2(\theta), A_3(\theta) \}$, where $\theta = 60^\circ$.}
    \label{fig:kopa_partition_and_rotation}
\end{figure}

The idea behind kOPA is that the optimal partition is obtained by a suitable rotation of the partition $\mathcal{P}$, as described in the following theorem.

\begin{theorem}
    Let $z_1, \dots z_n$ be arbitrary non-zero complex numbers, and let $\widetilde{w}_1, \dots \widetilde{w}_n$ be an optimal solution to the optimization problem
    \begin{align} \label{thm2}
        \max_{w_i} \quad &\big| w_1 z_1 + \dots + w_n z_n \big|, \\ 
        \nonumber \textrm{s.t.} \quad &w_i \in \mathcal{S} = \{a_1, \hdots, a_k \}, \hspace{3mm} i = 1, \hdots, n.
    \end{align}
    Then, there exists $\delta \in [0,2\pi]$ such that
    \begin{enumerate}
        \item None of the $z_i$ lie on the edges of the cones of the rotated partition  $\mathcal{P}(\delta)$.
        \item  If  $z_i$  belongs to the cone $A_j(\delta)$ in the rotated partition $\mathcal{P}(\delta)$, then     $\widetilde{w}_i = a_{j}$.
    \end{enumerate}
\end{theorem}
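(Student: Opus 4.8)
I would follow the linearization idea behind Theorem~1: project the optimal sum onto its own direction so that the per-cell optimality conditions become linear inequalities. Assume $k\ge 2$ (the case $k=1$ being vacuous) and, without loss of generality, that $a_1,\dots,a_k$ are pairwise distinct (otherwise the corresponding cones $A_i$ are empty and one argues with the distinct values among them), so that $\mathcal P$ in \eqref{A_i} is a genuine radial partition. First I would verify that the optimal value is strictly positive, hence $G^\star:=\widetilde w_1 z_1+\cdots+\widetilde w_n z_n\ne 0$: were the maximum $0$, the feasible assignments $(a_1,a_1,\dots,a_1)$ and $(a_2,a_1,\dots,a_1)$ would both give a zero sum, and subtracting yields $(a_2-a_1)z_1=0$, contradicting $z_1\ne 0$ and $a_1\ne a_2$. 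Put $u:=G^\star/|G^\star|$ and $\delta:=\arg G^\star\in[0,2\pi]$, so $e^{-j\delta}=\bar u$. Since $A_j(\delta)=e^{j\delta}A_j$, membership $z_i\in A_j(\delta)$ unwinds to $\re\big((a_j-a_\ell)\bar u z_i\big)>0$ for all $\ell\ne j$, i.e.\ to ``$a_j$ is the \emph{strict} maximizer of $a\mapsto\re(\bar u\,a\,z_i)$ over $a\in\mathcal S$.'' Thus the theorem is equivalent to the claim: for every $i$, this scalar function has a unique maximizer on $\mathcal S$, and that maximizer is $\widetilde w_i$.

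The key step establishes that $\widetilde w_i$ maximizes $\re(\bar u\,a\,z_i)$ over $a\in\mathcal S$. Let $w_i^{\dagger}\in\mathcal S$ be any such maximizer. Using $|G^\star|=\re(\bar u G^\star)=\sum_i\re(\bar u\,\widetilde w_i z_i)$, we get
\begin{align*}
|G^\star| &=\sum_i\re(\bar u\,\widetilde w_i z_i)\;\le\;\sum_i\re\big(\bar u\,w_i^{\dagger} z_i\big)\\
&=\re\Big(\bar u\sum_i w_i^{\dagger} z_i\Big)\;\le\;\Big|\sum_i w_i^{\dagger} z_i\Big|\;\le\;|G^\star|,
\end{align*}
the last step being the optimality of $\widetilde w$. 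All inequalities are therefore equalities; the first is termwise, so $\re(\bar u\,\widetilde w_i z_i)=\max_{a\in\mathcal S}\re(\bar u\,a\,z_i)$, i.e.\ $\widetilde w_i=a_j$ for some maximizing index $j$.

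It remains to show this maximizer is unique, which yields both conclusions simultaneously. Suppose $\widetilde w_i=a_{j_1}$ while some $a_{j_2}$ with $j_2\ne j_1$ also attains the maximum, so $\re\big(\bar u(a_{j_2}-a_{j_1})z_i\big)=0$. Replacing $\widetilde w_i$ by $a_{j_2}$ gives a feasible sum $G'=G^\star+(a_{j_2}-a_{j_1})z_i$ with $\re(\bar u G')=\re(\bar u G^\star)=|G^\star|$, so $|G'|^2=|G^\star|^2+\big|\operatorname{Im}(\bar u G')\big|^2\ge|G^\star|^2$; optimality forces $|G'|\le|G^\star|$, hence $\operatorname{Im}(\bar u G')=0$, $G'=|G^\star|u=G^\star$, and thus $(a_{j_2}-a_{j_1})z_i=0$, which is impossible. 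Hence each $z_i$ has a unique maximizing index $j$, i.e.\ $z_i$ lies in the open cone $A_j(\delta)$ and on no edge of $\mathcal P(\delta)$ (property~1), with $\widetilde w_i=a_j$ (property~2).

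I expect the main obstacle to be the very first move: a direct comparison of $\big|\sum_i w_i z_i\big|$ across weight assignments is quadratic and unwieldy, so one must instead take real parts against the \emph{fixed} optimal direction $u=G^\star/|G^\star|$, which renders the objective separable and linear in each $w_i$. A second, subtler point is the uniqueness argument: it shows that optimality by itself already forbids any $z_i$ from lying on a cone edge, so --- unlike the partition-enumeration step of the algorithm, where one rotates a separating configuration to avoid the $z_i$'s --- no separate perturbation of $\mathcal P$ is needed to secure property~1.
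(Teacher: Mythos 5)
Your proof is correct and follows essentially the same route as the paper's: both take $\delta=\arg\big(\widetilde w_1 z_1+\cdots+\widetilde w_n z_n\big)$ and rest on the same exchange lemma ($|\widetilde s+v|^2=|\widetilde s|^2+|v|^2+2\re(\widetilde s v^*)>|\widetilde s|^2$ whenever $v\neq 0$ and $\re(\widetilde s v^*)\ge 0$), which you reach via the equality-chain and tie-breaking argument where the paper instead swaps weights case by case for each $z_i$ on a cone or edge. The only substantive addition is your check that the optimal sum is nonzero so that $\arg$ is well defined, a degenerate case the paper passes over silently.
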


\begin{proof}
    The proof is presented in Appendix \ref{theorem_2_proof}.
\end{proof}

We can now formulate kOPA as follows. Firstly, compute $\mathcal{P}$. Then, rotate $\mathcal{P}$ by $\delta$ (sweeping from $0$ to $360^\circ$), to obtain various partitions. For each of these partitions, assign weights according to Theorem 2, and compute $|\sum w_i z_i|$. Finally, choose the partition that corresponds to the largest value of $|\sum w_i z_i|$. The details are formalized in Algorithm \ref{kOPA_new}. Theorem 2 guarantees that the chosen weight assignment is optimal. An example of optimal weight assignment is shown in Fig~\ref{fig:kopa_example}.

\begin{algorithm}[htb]
    
    \caption{k-Optimal Partitioning Algorithm (kOPA)} \label{kOPA_new}
    \begin{algorithmic}[1]
        \Require Non-zero complex numbers $z_1, \hdots, z_n$, and the constraint set $\mathcal{S} = \{a_1, a_2, \dots a_k\}$. 
        \vspace{2mm}
        \State Compute the radial partition $\mathcal{P} = \{A_1, \hdots, A_k\}$ as described.
        \State max $\gets 0, \delta \gets 0$.
        \While{$\delta \leq 2 \pi$}
            \State Find the smallest $\psi \in (0, 2\pi]$ such that some $z_i$ lies on one of the edges of the cones of $\mathcal{P}(\delta + \psi)$.
            
            \State Rotate the current partition $\mathcal{P}(\delta)$ by some $\psi' \in (0, \psi)$ to obtain $\mathcal{P}(\delta + \psi')$.
            
            \State Assign $w_i = a_j$ whenever $z_i \in A_j(\delta + \psi')$, for all $i$.
            
            \State Compute $g = \big| w_1 z_1 + \hdots + w_N z_N \big|$.
            
            \If{$g > $  max}
                \State max $\gets g$, optimal weights $\gets \begin{bmatrix}
                    w_1 & \dots & w_n
                \end{bmatrix}$.
                \EndIf
            \State $\delta \gets \delta + \psi$.
        \EndWhile
        \State \Return optimal weights
    \end{algorithmic}
\end{algorithm}

\begin{figure}[htb!]
    \centering
    \includegraphics[width = 0.22 \textwidth]{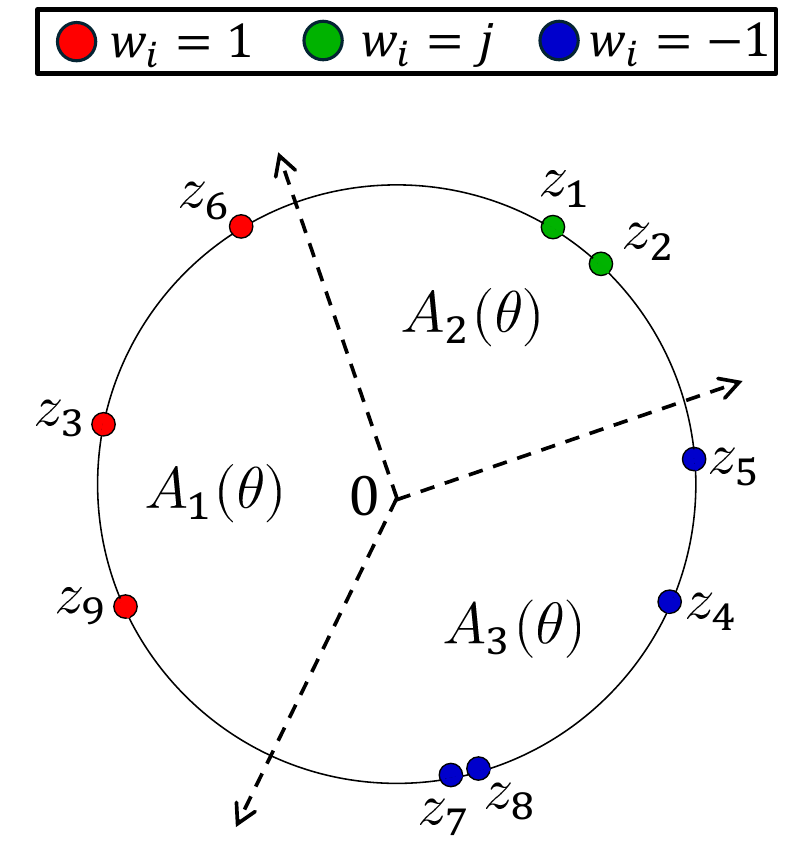}
    \caption{The optimal weight assignment for a set of 9 randomly generated unit modulus complex numbers $z_1, \dots z_9$ (with $ \mathcal{S} = \{1, j, -1\})$, which can be seen as arising from a rotated partition $\mathcal{P}(\theta)$.}
    \label{fig:kopa_example}
\end{figure}
We make the following observations.
\begin{itemize}
    \item The number of unique partitions obtained in kOPA is at most $nk$. This is because when rotating $\mathcal{P}$, the partition only changes when one of the $k$ edges crosses some $z_i$, and this can happen in at most $nk$ unique ways. As a consequence, kOPA has time complexity $O(nk)$ (where $n = MN$ in the context of IRS beamforming), making it extremely fast.
    \item In practical implementations, $k$ is usually equal to $2^b$ where $b$ denotes the number of bits, and the allowed phase shifts are equispaced by $\frac{2 \pi }{k}$, which corresponds to $\mathcal{S} = \{1, e^{j \frac{2 \pi}{k}}, \dots e^{j \frac{2 \pi(k-1)}{k}} \}$. In this case, it can be seen that the cones $A_i, \hdots, A_k$, each subtend an angle of $\frac{2 \pi}{k}$ at the origin. 
\end{itemize}

\subsection{Beamforming in multiple directions}
We now discuss the problem of forming simultaneous beams in multiple directions. Suppose $(\theta_1, \phi_1), \hdots, (\theta_l, \phi_l)$ are the directions in which we want to beamform, and denote the array factor along these directions as $G_1, \hdots, G_l$, where $G_j = G( \theta_j, \phi_j)$ for all $j \in \{1, \hdots, l\}$. Note that we can write $G_j = \sum_{i} w_{i} z_{i}^{(j)}$ where $i$ runs over the tuples $(m,n)$, and hence each $G_j$ is a function of the weights $\{w_i\}$. The optimization problem that we wish to solve is 
\begin{align} \label{multibeam_opt}
    \nonumber \max_{w_i} \quad &|G_1| + |G_2| + \hdots + |G_l|, \\
    \textrm{s.t.} \quad &w_{i} \in \mathcal{S},\hspace{4mm} \forall i.
\end{align}
We now construct an equivalent optimization problem as follows. Firstly, for a given set of weights, we have the following due to triangle inequality:
\begin{equation}
    |G_1| + |G_2| + \hdots + |G_l| \geq  |G_1 + \alpha_2 G_2 + \hdots + \alpha_l G_l |,
\end{equation}
where $\alpha_j\in\mathbb{C}$ with $|\alpha_j|=1$, and equality occurs iff the following co-phasing relation holds: $\arg{(G_1)} = \arg{(\alpha_2 G_2)} = \hdots = \arg{(\alpha_l G_l)}$. Hence, we have
\begin{align}
   \nonumber &\max_{w_i} |G_1| + |G_2| + \hdots + |G_l| \\
   \nonumber &= \max_{w_i} \max_{\substack{|\alpha_j|=1} } |G_1 + \alpha_2 G_2 + \hdots + \alpha_l G_l | \\
    &= \max_{\substack{|\alpha_j|=1} } \max_{w_i}  |G_1 + \alpha_2 G_2 + \hdots + \alpha_l G_l |.
\end{align}
The advantage of the equivalent problem is that for a given tuple $ (\alpha_2, \hdots, \alpha_l)$, the inner maximization problem takes the form $\max_{w_i} |\sum_i w_i y_i |$, because $G_1 + \alpha_2 G_2 + \hdots + \alpha_l G_l = \sum_i w_i(z_i^{(1)} + \alpha_2 z_i^{(2)} + \hdots + \alpha_l z_i^{(l)} )$, and thus it can be solved optimally using OPA. Therefore, our goal is to find the optimal tuple $(\alpha_2, \hdots, \alpha_l)$ for the outer maximization. 

A naive approach to solve this problem would be to choose a collection $\mathcal{D}$ of tuples $(\alpha_2, \hdots, \alpha_l)$, compute $\max_{w_i} |G_1 + \alpha_2 G_2 + \hdots + \alpha_l G_l |$ for each tuple in $(\alpha_2, \hdots, \alpha_l) \in \mathcal{D}$, and choose the $\{w_i\}$ that corresponds to the largest value. 

Instead, we present an iterative approach to solve this problem, based on the following theorem:
\begin{theorem}
   Let $\{\alpha^{(k)} \}_{k \geq 1} =  \{ (\alpha_2^{(k)}, \hdots, \alpha_l^{(k)}) \}_{k \geq 1}$ be a sequence of tuples, and $\{ w^{k} \}_{k \geq 1}$ be a sequence of weights that satisfy the following relations ($k \in \mathbb{N}$ denotes iteration number):
   \begin{itemize}
       \item $w^{(k)} = \arg \max_{w_i} |G_1 + \sum_{j=2}^l \alpha_j^{(k)} G_j|$.
       \item $\alpha_j^{(k+1)} = e^{j \big( \arg(G_1) - \arg(G_j) \big)}$ for $j = 2, \hdots, l$, where each $G_j$ is computed using the weights $w^{(k)}$.
   \end{itemize}
   Let $c_k = \max_{w_i} |G_1 + \sum_{j=2}^l \alpha_j^{(k)} G_j|$, and $d_k = |G_1(w^{(k)} )| + \hdots + |G_l(w^{(k)})|$. Then, the sequences $\{c_k\}_{k \geq 1}$ and $\{d_k\}_{k \geq 1}$ are both non-decreasing.
\end{theorem}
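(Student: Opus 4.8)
The plan is to reduce everything to a single two‑sided chain of inequalities, namely $c_k \le d_k \le c_{k+1}$ for every $k$, since both monotonicity assertions are immediate consequences: $c_k \le c_{k+1}$ follows directly, and $d_k \le c_{k+1} \le d_{k+1}$ follows by applying the first inequality of the chain at index $k+1$. Thus the work splits into proving (a) $c_k \le d_k$, and (b) $d_k \le c_{k+1}$.

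For (a), I would use that $c_k$ is, by definition, the value of $|G_1 + \sum_{j=2}^l \alpha_j^{(k)} G_j|$ at the maximizing weights $w^{(k)}$, i.e. $c_k = |G_1(w^{(k)}) + \sum_{j=2}^l \alpha_j^{(k)} G_j(w^{(k)})|$. Applying the triangle inequality and using $|\alpha_j^{(k)}| = 1$ gives $c_k \le |G_1(w^{(k)})| + \sum_{j=2}^l |G_j(w^{(k)})| = d_k$. For (b), I would invoke that $c_{k+1}$ is a maximum over all admissible weight vectors, hence at least the value attained at the particular choice $w^{(k)}$: $c_{k+1} \ge |G_1(w^{(k)}) + \sum_{j=2}^l \alpha_j^{(k+1)} G_j(w^{(k)})|$. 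The whole point of setting $\alpha_j^{(k+1)} = e^{j(\arg G_1 - \arg G_j)}$ with the $G_j$ evaluated at $w^{(k)}$ is that it enforces the co‑phasing relation $\arg G_1(w^{(k)}) = \arg\big(\alpha_j^{(k+1)} G_j(w^{(k)})\big)$ for each $j$; so all summands are nonnegative real multiples of a common unit phasor, the triangle inequality becomes an equality, and $|G_1(w^{(k)}) + \sum_{j=2}^l \alpha_j^{(k+1)} G_j(w^{(k)})| = |G_1(w^{(k)})| + \sum_{j=2}^l |G_j(w^{(k)})| = d_k$. Combining, $c_{k+1} \ge d_k$, which closes the chain.

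The one place that needs care — and the only genuine obstacle — is the degenerate case where some $G_j(w^{(k)})$ (or $G_1(w^{(k)})$) vanishes, so that $\arg(\cdot)$ and hence $\alpha_j^{(k+1)}$ are not literally well defined. This is harmless: a vanishing $G_j$ contributes $0$ to every term on both sides of the inequalities above irrespective of the value of $\alpha_j^{(k+1)}$, so one simply fixes $\alpha_j^{(k+1)} = 1$ for such indices and applies the equality case of the triangle inequality only to the nonzero terms, whose alignment is exactly the stated co‑phasing condition. I would state this carefully once and then proceed as above.

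Finally, although the theorem only claims that $\{c_k\}$ and $\{d_k\}$ are non‑decreasing, I would add the remark that each $d_k$ is bounded above by $\sum_{j=1}^l \sum_i |z_i^{(j)}|$, so both sequences are in fact convergent; this is what justifies terminating the iteration in practice, and it follows with no extra effort once the chain $c_k \le d_k \le c_{k+1}$ is in hand.
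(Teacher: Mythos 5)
Your proposal is correct and follows essentially the same route as the paper's proof in Appendix D: establish $c_k \le d_k$ by the triangle inequality at the maximizer $w^{(k)}$, and $d_k \le c_{k+1}$ by noting that the co-phasing choice of $\alpha_j^{(k+1)}$ makes the triangle inequality tight at $w^{(k)}$ while $c_{k+1}$ is a maximum over all weights, yielding the chain $c_k \le d_k \le c_{k+1} \le d_{k+1}$. Your explicit treatment of the degenerate case $G_j(w^{(k)})=0$ (where $\arg$ is undefined) is a point of care the paper omits, and your boundedness/convergence remark is a harmless addition.
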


\begin{proof}
    The proof is presented in Appendix \ref{theorem_3_proof}.
\end{proof}

Thus, performing the iteration steps as specified in the above theorem for each $(\alpha_2, \hdots, \alpha_l) \in \mathcal{D}$ will yield better weights than the naive approach. The multiple beamforming method is summarized in Algorithm \ref{multibeam}.

\begin{algorithm}[htb]
    
    \caption{Multiple Beamforming} \label{multibeam}
    \begin{algorithmic}[1]
        \Require Directions $(\theta_1, \phi_1), \hdots, (\theta_l, \phi_l)$ in which we desire to beamform, and a discrete collection $\mathcal{D}$ of tuples $(\alpha_2, \hdots, \alpha_l)$, where $|\alpha_j| = 1$ for all $j\in \{2, \hdots, l\}$. 
        \vspace{2mm}
        \For{ each $(\alpha_2, \hdots, \alpha_l) \in \mathcal{D}$ }
            \State Set $(\alpha_2^{(1)}, \hdots, \alpha_l^{(1)}) = (\alpha_2, \hdots ,\alpha_l) $, and $ k \gets 1$.

            \While{convergence is not reached}

            \State Compute $w^{(k)} = \arg \max_{w_i} |G_1 + \sum_{j=2}^l \alpha_j^{(k)} G_j|$.

            \State Set $\alpha_j^{(k+1)} \gets e^{j \big( \arg(G_1) - \arg(G_j) \big)}$ for $j = 2, \hdots , l$, where each $G_j$ is computed using the weights $w^{(k)}$.
            \EndWhile

            \State Compute $|G_1| + \hdots + |G_l|$ using the last computed weights.

        \EndFor
        
        \State \Return weights that corresponds to largest $|G_1| + \hdots + |G_l|$.
    \end{algorithmic}
\end{algorithm}

We remark that while the above algorithm is not guaranteed to return the optimal weights, in practice the resulting weights are found to give very good performance. Typically, a good choice for $\mathcal{D}$ is the collection of all tuples $(\alpha_2,\hdots,\alpha_l)$ where each $\alpha_j$ is uniformly sampled on the unit circle. We discuss these issues and numerical results for multiple beamforming in Section III. We note in passing that all the algorithms that have been discussed so far can also be applied for beamforming from a basestation in case it also employs discrete weights; the only modification required is to eliminate the incident field related terms in \eqref{phi_mn}, which is conveniently achieved by setting $\theta_{in}=0$.

\section{Results for optimal beamforming}

In this Section, we provide various simulation results in support of the efficacy of our proposed method. In all simulations, we take the inter-element spacing as $d = \frac{\lambda}{2}$.

\subsection{Optimality of OPA}
We first compare the proposed OPA with the existing thresholding method. The hypothesis is that the thresholding method does not always return the optimal weights, and we provide an example that shows exactly that. Consider a $3 \times 3$ array, with incident direction $(\theta_{in}, \phi_{in}) = (-45^\circ, 215^\circ)$ and the desired direction $(\theta_0, \phi_0) = (-30^\circ, 35^\circ)$. All weights are constrained to the set $\{1,-1\}$. From the weights returned by the thresholding method, we obtain $|G(\theta_0, \phi_0)|^2 = -3.86 $ dB, whereas from the weights obtained by OPA, we obtain $|G(\theta_0, \phi_0)|^2 = -2.95$ dB. Fig.~\ref{comparision_with_thresholding} shows the scatter plots the weight assignment for each method. The OPA results are clearly superior, and this example confirms our hypothesis. 

\begin{figure}[htb!]
    \centering
    \includegraphics[width = 0.5\textwidth]{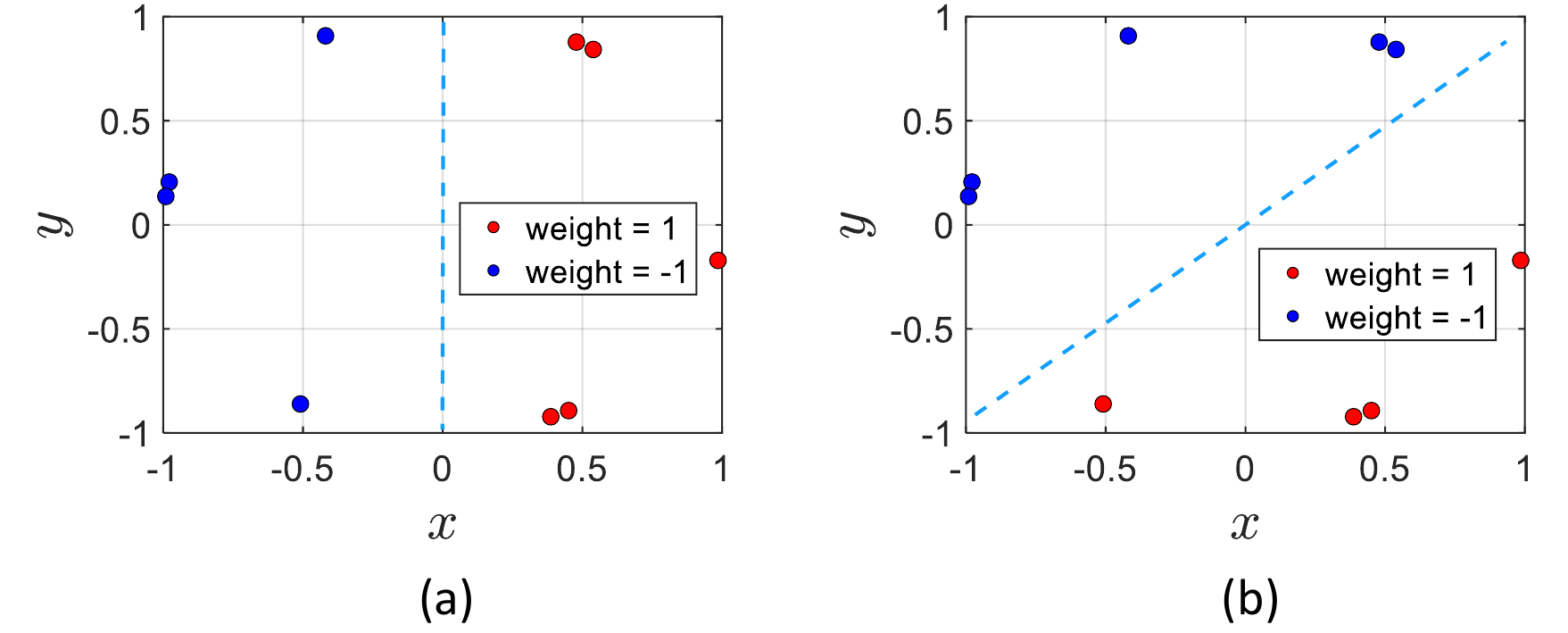}
    \caption{Scatter plots of $z_{m,n}$ along with assignment of weights due to (a) thresholding method (b) OPA. The dashed lines indicate the separating line in each case.}
    \label{comparision_with_thresholding}
\end{figure}

Next, we establish the optimality of OPA by comparing it with brute force combinatorial search of the discrete weights. For the same incident and desired directions as above, we compare thresholding, OPA and combinatorial search for various array sizes, as shown in Fig.~\ref{fig:thresh_vs_opa_vs_brute}. In Fig.~\ref{fig:thresh_vs_opa_vs_brute}(a), all weights are constrained to $\{ -1,1 \}$, while in Fig.~\ref{fig:thresh_vs_opa_vs_brute}(b), the amplitudes and phase difference for the constraint set of each weight are independently and randomly sampled from $U(0.75,1)$ and $U(160^\circ, 180^\circ)$ respectively, where $U$ denotes the uniform distribution (\ie,~for each weight $w_{m,n}$, we have $a_{m,n} \sim U(0.75, 1), |b_{m,n}| \sim U(0.75,1)$ and $\arg{b_{m,n}} \sim U(160^\circ, 180^\circ)$), in order to capture realistic scenarios (for e.g.~on account of fabrication related imperfections). Note that in case (b), the thresholding algorithm is modified slightly, such that the continuous solution $w_{m,n}$ is mapped to whichever among $\{ a_{m,n}, b_{m,n} \}$ it is closest to, for all $m,n$. We observe that OPA and combinatorial search yield the same values for $|G|^2$ in both cases, confirming the optimality of OPA. Further, the gap between the thresholding and OPA solutions is on average larger in (b) than in (a), which signifies a preference of OPA over thresholding in realistic scenarios.

\begin{figure}[htb!]
    \centering
    \includegraphics[width = 0.5 \textwidth]{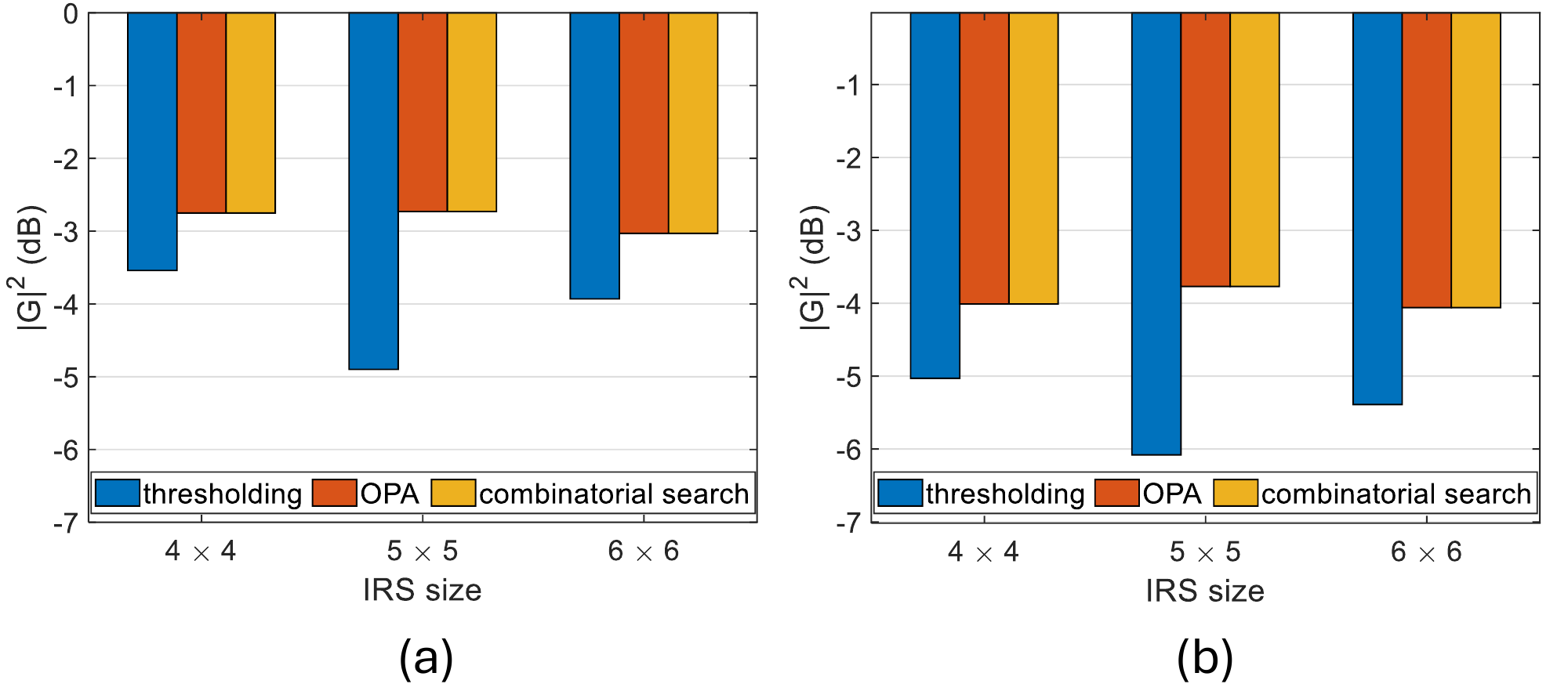}
    \caption{Comparision of thresholding, OPA and combinatorial search for various array sizes, with constraint sets $\{a_{m,n}, b_{m,n} \}$ equal to (a) $\{1, -1 \}$, and (b) $a_{m,n}, |b_{m,n}| \sim U(0.75,1) \text{ and } \arg{b_{m,n}} \sim U(160^\circ, 180^\circ) $, where $U$ denotes uniform distribution}
    \label{fig:thresh_vs_opa_vs_brute}
\end{figure}

\subsection{Visualization of weights}
We now visualize the output of the OPA by plotting the weight distribution across the IRS, and the resulting beam patterns for a couple of sample cases. We fix the array size to be $30 \times 30$, and choose $\theta_{in} = -30^\circ$, $\theta_0 = -15^\circ$ and $\phi_0 = 45^\circ, \phi_{in}=225^{\circ}$. OPA is run in MATLAB, on a system with Intel Core i7 processor and $16$ GB RAM, and the optimal weights are obtained in $0.4$ seconds. 

As seen in Fig.~\ref{fig:weight_dist_rad_pat}(a), the distribution consists of diagonal strips of $1$ and $-1$, whose angle with the $y$ axis (vertical axis) is precisely $\phi_{in}$. Intuitively, this makes sense because the strips are precisely the locus of points on the IRS that get the same initial phase due to the incident beam. 

Next, we consider the case of $\theta_{in} = -45^\circ$, $\theta_0 = -30^\circ$ and $\phi_0 = 0^\circ, \phi_{in}=180^{\circ}$. As seen in Fig.~\ref{fig:weight_dist_rad_pat}(a), the distribution consists of vertical strips of $1$ and $-1$ as expected. However, in addition to the expected reflected beam at $\theta_0 = -30^\circ$, we also observe a second peak of equal strength around $\theta_0 = -5^\circ$. Understanding the existence of this second peak is of great importance, and we do so in Section IV. 

\begin{figure}[htb!]
    \centering
    \includegraphics[width = 0.5 \textwidth]{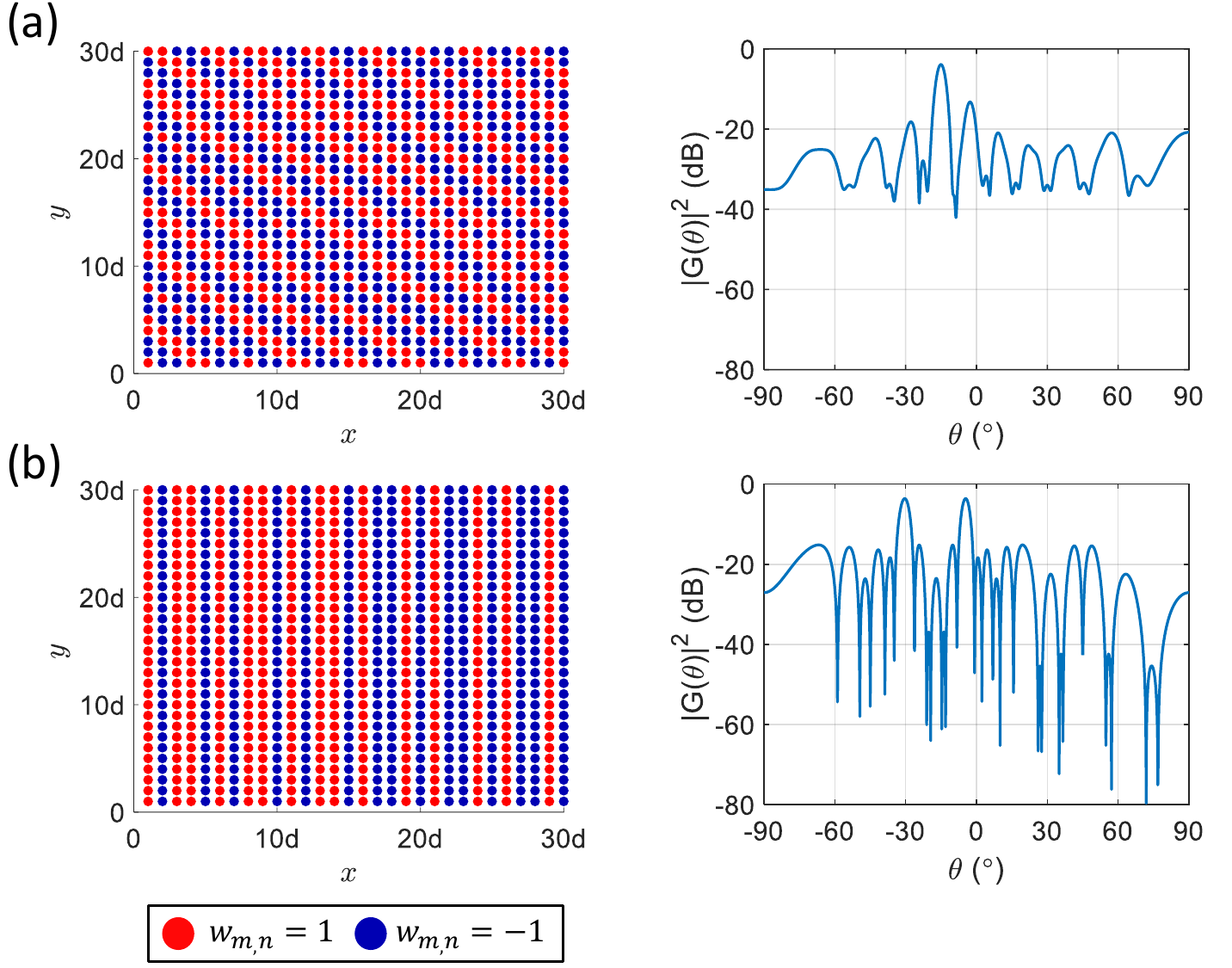}
    \caption{Distribution of optimal weights on IRS and the corresponding normalized radiation pattern, for (a) $(\theta_{in}, \phi_{in}) = (-30^\circ, 225^\circ)$ and $(\theta_0, \phi_0) = (-15^\circ, 45^\circ)$ (b)  $(\theta_{in}, \phi_{in}) = (-45^\circ, 180^\circ)$ and $(\theta_0, \phi_0) = (-30^\circ, 0^\circ)$}
    \label{fig:weight_dist_rad_pat}
\end{figure}

\subsection{Beamforming error vs number of bits}
An important consideration in discrete beamforming is the selection of number of bits of allowed phase shifts. While increasing the number of bits increases the quality of beamforming, such as directivity and power in the desired direction, it also makes it significantly harder to implement in hardware. This leads to a natural tradeoff between the number of bits and the quality of beamforming, and is explored in \cite{4476079_no_of_bits, zhang_reconfigurable_2020}. Of particular interest to us is the effect of number of bits on the \emph{beamforming error} (BE), which we define as the angle between the desired mainlobe direction, $(\theta_0, \phi_0)$, and the actual mainlobe direction, $(\theta_0', \phi_0')$, achieved by our algorithms, and is computed using the formula,
\begin{align}
    \label{beamforming_error_formula}
   \cos(\text{BE}) =  \sin{\theta_0} \sin{\theta_0'} \cos{ (\phi_0 - \phi_0')} + \cos{ \theta_0} \cos{\theta_0'}.
\end{align}

Fig.~\ref{fig:error_vs_bits} shows the variation of beamforming error with the number of bits, for different IRS sizes. We fix $\theta_{in} = 0$ (normal incidence), $ \phi_0 = 30^\circ$, and compute the average beamforming error over $\theta_0 \in [-45^\circ, 45^\circ]$ spaced by intervals of $ 5^\circ$. The optimal weights are obtained using kOPA. We observe that the beamforming error decreases if we increase the number of bits and the IRS size. In particular, for $15\times 15$ and $30 \times 30$ IRS, we observe negligible beamforming error even with a 1-bit implementation. Additionally, the 3-dB beamwidths and mainlobe gains for various array sizes are shown in Table.~\ref{table:beamwidth_and_gain}, where we define the gain in a direction $(\theta, \phi)$ as $20 \log{\big(MN |G(\theta, \phi)| \big)} $, with $G(\theta, \phi)$ given by \eqref{afac}. We observe that larger array sizes correspond to larger gains and smaller beamwidth. For example, a $30\times 30$ array exhibits sub-$5^\circ$ beamwidth. This, when combined with its low beamforming error, makes $30\times 30$ a good practical choice of array size. 

\begin{figure}[htb!]
    \centering
    \includegraphics[width = 0.4 \textwidth]{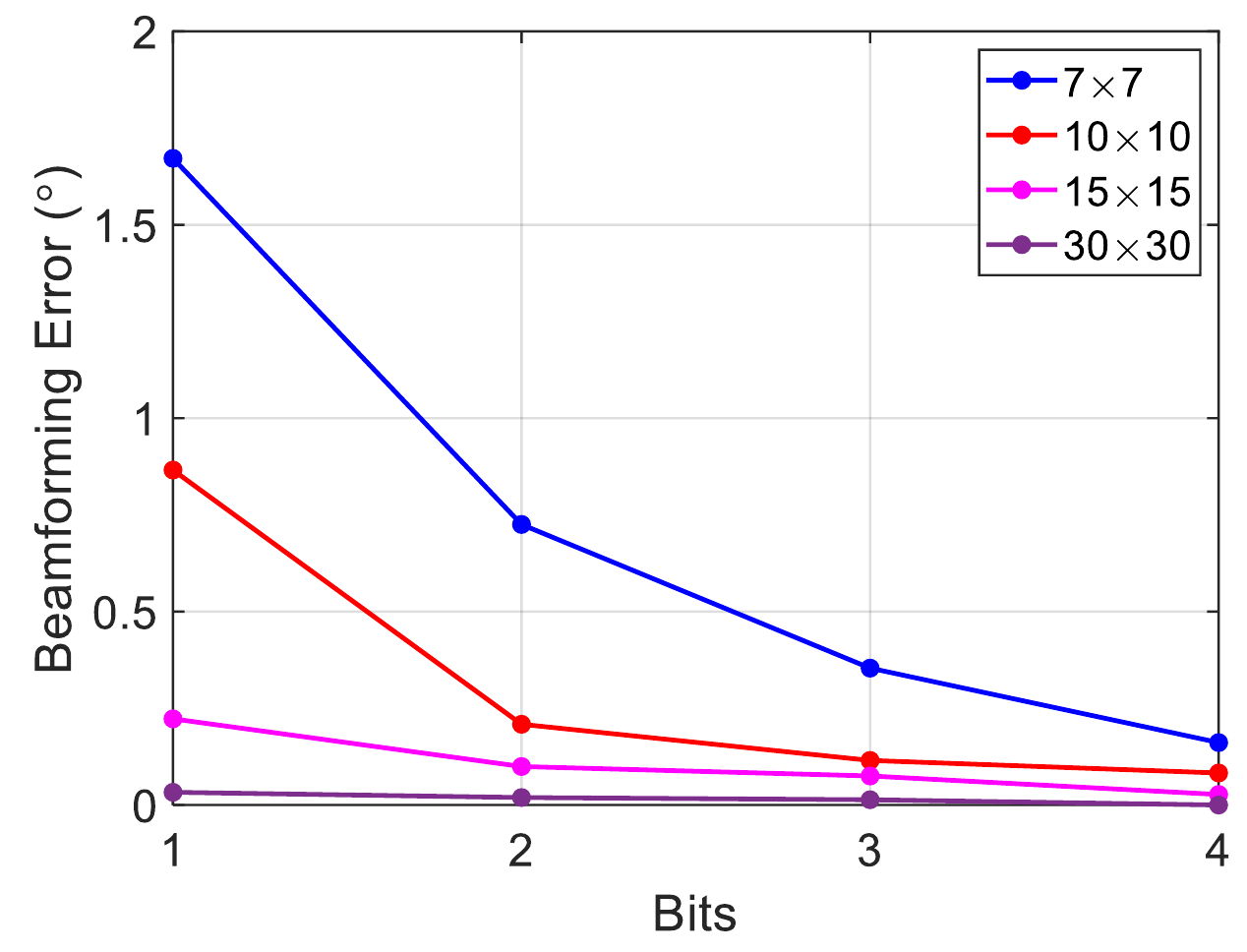}
    \caption{Variation of beamforming error with the number of bits of allowed phase shifts, under normal incidence with $\phi_{0} = 30^\circ$, for $7\times 7, 10\times 10$, $15\times 15$ and $30 \times 30 $ IRS. The errors are averaged over $\theta_0 \in [-45^\circ, 45^\circ]$ spaced by intervals of $5^\circ$. }
    \label{fig:error_vs_bits}
\end{figure}

\begin{table}[htb!]

\caption{3-dB beamwidth and mainlobe gain for various array sizes}
\label{table:beamwidth_and_gain}
\centering
\begin{tabular}{|c|c|c|c|c|}
    \hline
    & $7 \times 7$ & $10 \times 10$ & $15 \times 15$ & $30 \times 30$ \\ \hline
    3-dB beamwidth & $15.8^\circ$  & $10.3^\circ$    & $6.8^\circ $    & $3.4^\circ$    \\ \hline
    $20 \log(MN |G|)$ & 31    & 36.4    & 43.7    & 55.3    \\ \hline
\end{tabular}
\end{table}

\subsection{Beamforming in multiple directions}
We now illustrate an example of multiple beamforming for various array sizes. Fig.~\ref{fig:multibeam}(a) and \ref{fig:multibeam}(b) show the radiation patterns for various array sizes, for the cases where we desire two beams and three beams, respectively. The weights are constrained to $\mathcal{S} = \{1, -1\} $, and are obtained using Algorithm \ref{multibeam}. For the two-beam case in Fig.~\ref{fig:multibeam}(a), we choose $\mathcal{D} = \{ e^{ j\frac{2 \pi k}{30} } \big|  k = 1, \hdots, 30\}$, and for the three-beam case in Fig.~\ref{fig:multibeam}(b), we choose $\mathcal{D} = \{ \big( e^{ j\frac{2 \pi k}{30}},  e^{ j\frac{2 \pi l}{30} }\big) \big|  k,l = 1, \hdots, 30\}$. For each tuple $\alpha \in \mathcal{D}$, the iteration process in Algorithm \ref{multibeam} is observed to converge very quickly (within $10$ iterations).

We observe that for smaller array sizes such as $10\times 10$, the ability to form multiple beams is limited, as evident from the poor sidelobe levels. Additionally, smaller array sizes perform poorly when the desired angles are close to each other. This issue is not faced by larger array sizes such as $30 \times 30$, where we can achieve a sidelobe level of roughly $10$ dB as seen in Fig.~\ref{fig:multibeam}, providing yet another justification to use larger array sizes.

\begin{figure}[htb!]
    \centering
    \includegraphics[width=0.5\textwidth]{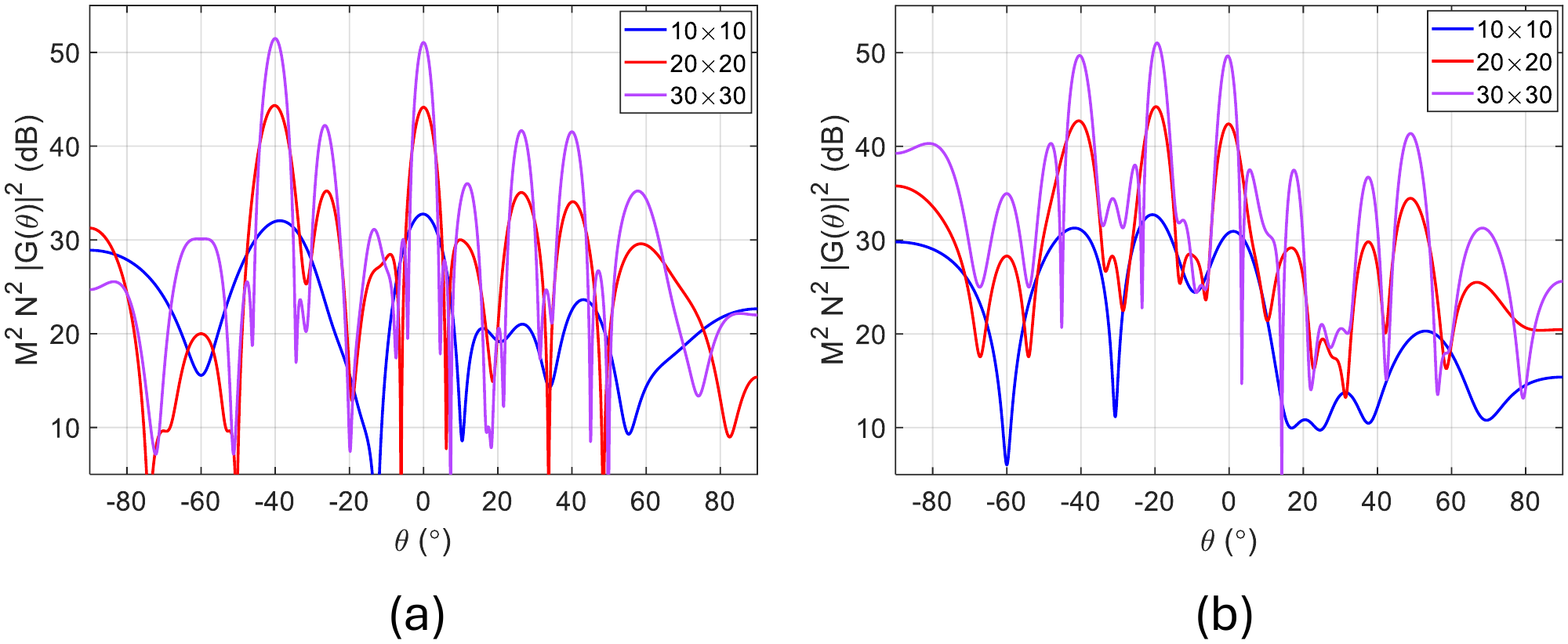}
    \caption{Radiation patterns illustrating multiple beamforming for various array sizes, with incident direction $(\theta_{in}, \phi_{in}) = (60^\circ, 210^\circ)$, for (a) two beams, where $(\theta_1, \phi_1) = (0^\circ, 30^\circ )$ and $(\theta_2, \phi_2) = (-40^\circ, 30^\circ )$ (b) three beams, where $(\theta_1, \phi_1) = (0^\circ, 30^\circ ), (\theta_2, \phi_2) = (-20^\circ, 30^\circ )$ and $(\theta_3, \phi_3) = (-40^\circ, 30^\circ )$.  }
    \label{fig:multibeam}
\end{figure}

\section{Grating lobe Mitigation}
\subsection{Analysis of grating lobes}
We now investigate a critical issue in beamforming concerning the presence of grating lobes in the radiation pattern. A grating lobe is an undesired lobe with the same intensity as the main lobe (\ie,~has sidelobe level (SLL) equal to 0 dB). Such a lobe is seen in Fig.~\ref{fig:weight_dist_rad_pat}(b); these lobes can lead to interference in the unwanted directions, and security issues in the context of wireless communications and are undesirable.

We now derive theoretical conditions under which a grating lobe is guaranteed to exist, for the 1-bit case where each weight is constrained to $\{1, -1\}$. By examining the array factor expression in \eqref{afac}, we can see that given a main lobe at $(\theta_0, \phi_0)$, a grating lobe will appear at $(\theta^*, \phi^*)$ if the following condition holds for all $m,n$:
\begin{equation}
    \label{gl_cond}
    \varphi_{m,n}(\theta^*,\phi^*) = -\varphi_{m,n}(\theta_0, \phi_0) + \pi a m + \pi b n,
\end{equation}
where $a$ and $b$ are even integers, because \eqref{gl_cond} would then imply $|G(\theta^*, \phi^*)| = |G(\theta_0, \phi_0)|$. Further, using the phase expression from \eqref{phi_mn} and comparing the coefficients of $m,n$ on both sides, \eqref{gl_cond} gives:
\begin{subequations} \label{conditions_for_0_sll}
    \begin{align}
    &\underbrace{- 2 \sin{\theta_{in}} \cos{\phi_{in}} + \sin{\theta_0} \cos{\phi_0} + \frac{\lambda a}{2d}}_{A} = - \sin{\theta^*} \cos{\phi^*}\\  
    &\underbrace{- 2 \sin{\theta_{in}} \sin{\phi_{in}} + \sin{\theta_0} \sin{\phi_0} + \frac{\lambda b}{2d}}_{B} = - \sin{\theta^*} \sin{\phi^*} 
\end{align}
\end{subequations}

If \eqref{conditions_for_0_sll} holds for some choice of even integers $a$ and $b$, then we have

\begin{equation}
    A^2 + B^2 = \sin^2{(\theta^*)} \text{,  and } \tan{\phi^*} = \frac{B}{A}.
\end{equation}

Therefore, if there exist even integers $a,b$ such that $A^2 + B^2 \leq 1$, then we can compute $\theta^*=\pm \arcsin{\sqrt{A^2+B^2}}$ and $\phi^* =\arctan{\frac{B}{A}}$. If it turns out that $(\theta^*, \phi^*) \neq (\theta_0, \phi_0)$, then a grating lobe exists at $(\theta^*, \phi^*)$.  

\begin{figure}[htb!]
    \centering
    \includegraphics[width = 0.5 \textwidth]{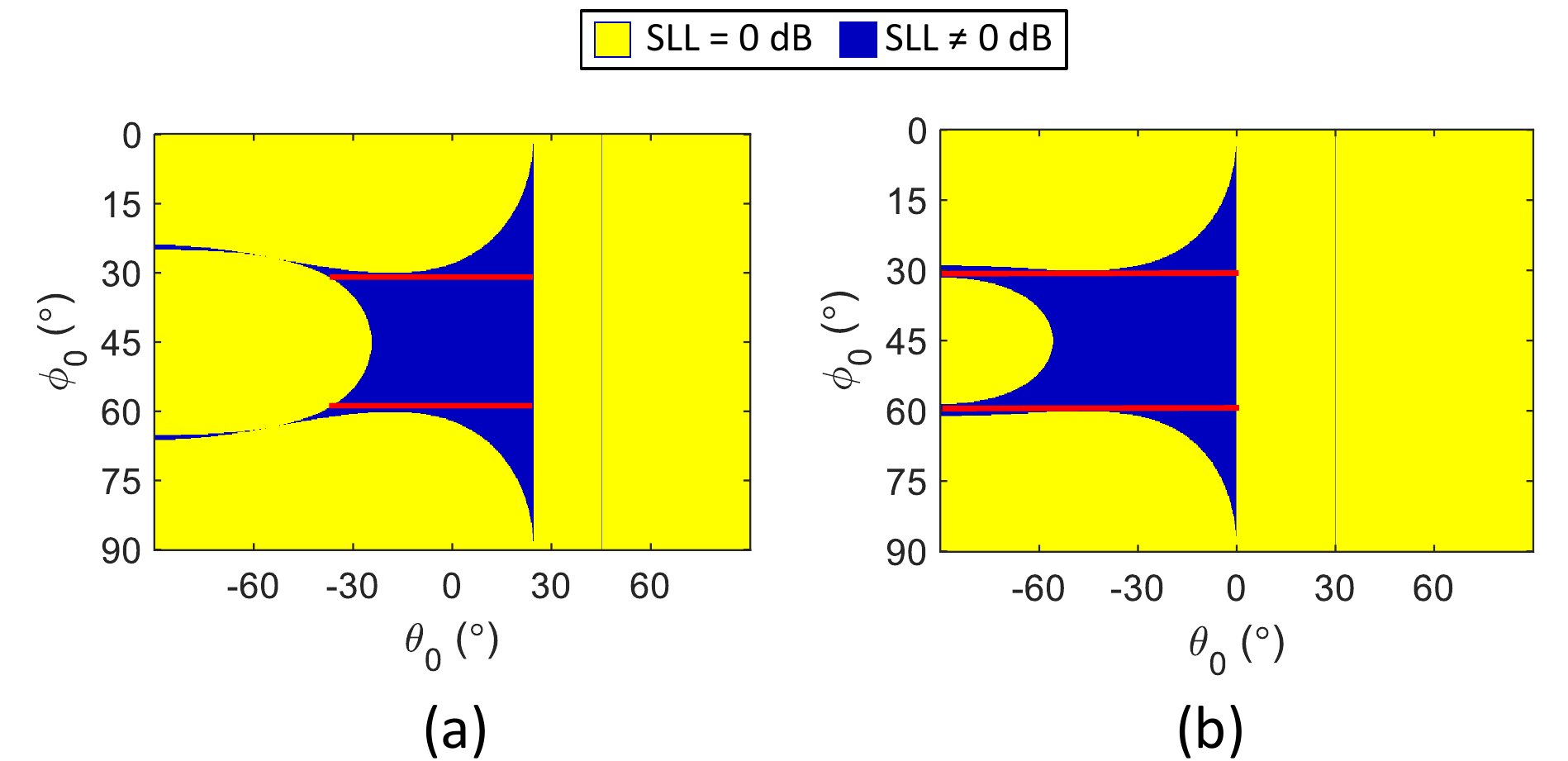}
    \caption{Plots showing the existence of grating lobes according to \eqref{conditions_for_0_sll} for varying main lobe directions $(\theta_0, \phi_0)$, when the incident and reflected beams lie on the same vertical plane ($\phi_{0} = \phi_{in} - 180^\circ$), for (a) $\theta_{in} = -45^\circ$, and (b) $\theta_{in} = -30^\circ$. Yellow regions indicate the presence of a grating lobe at $(\theta^*, \phi^*) $ computed from \eqref{conditions_for_0_sll}, different from the given main lobe direction $(\theta_0, \phi_0)$. The red lines indicate the values of $\phi_0$ corresponding to largest range of non-zero SLL.}
    \label{theoretical_heatmaps_for_45_and_30}
\end{figure}

\begin{figure}[htb!]
    \centering
    \includegraphics[width = 0.5\textwidth]{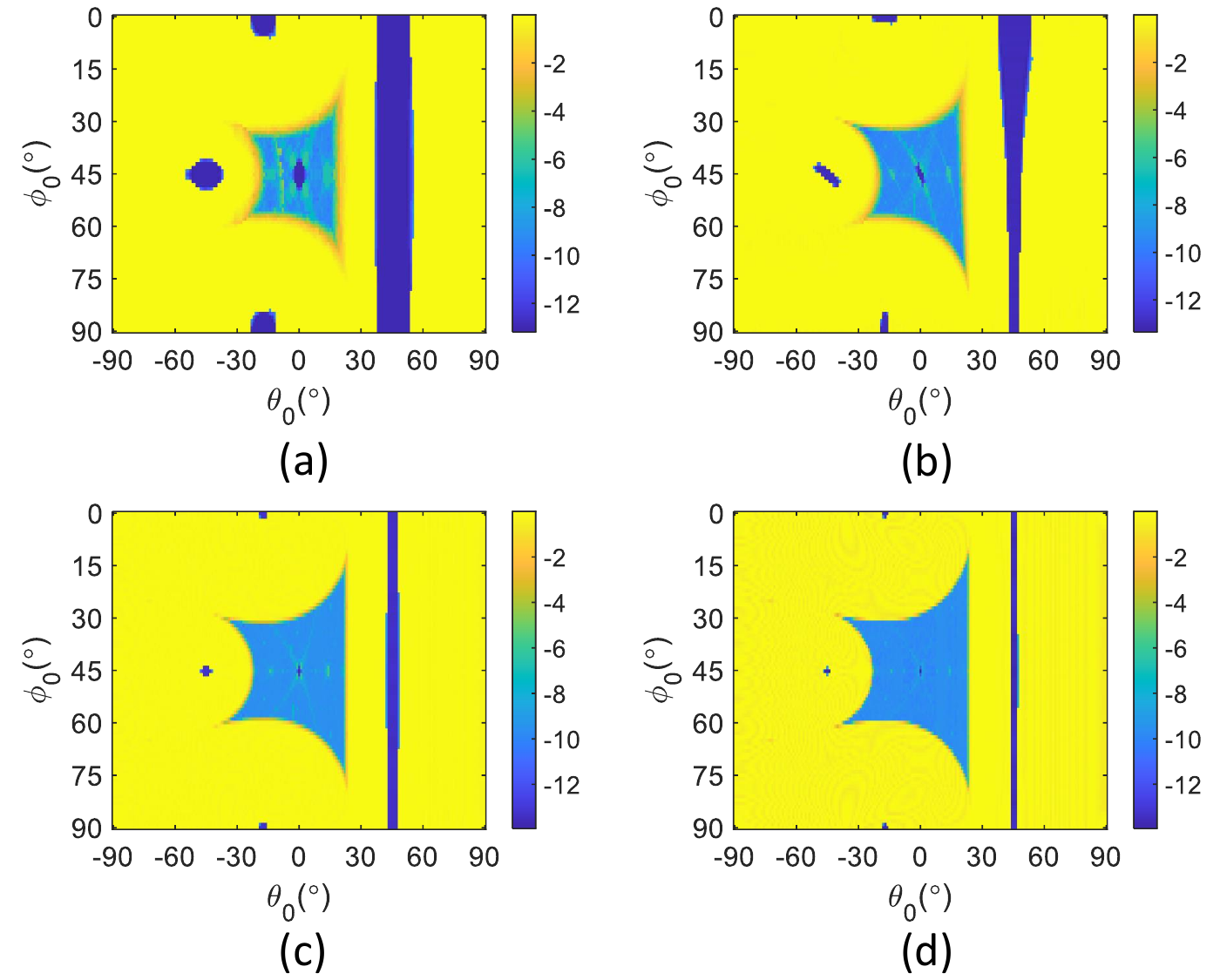}
    \caption{Plots indicating the simulated SLL (in dB) for varying desired reflected directions $(\theta_0, \phi_0)$, when the incident and desired reflected directions lie on the same vertical plane ($\phi_{0} = \phi_{in} - 180^\circ$) and $\theta_{in} = -45^\circ$, for array sizes: (a) $10 \times 10$, (b) $10 \times 30$, (c) $30 \times 30$, and (d) $50 \times 50$.}
    \label{simulated_45}
\end{figure}

In order to graphically visualize the algebraic conditions for the existence of a grating lobe (\ie,~\eqref{conditions_for_0_sll}), we consider several cases. For ease of visualization, we fix the incidence beam angle and sweep the required output beam angle, while maintaining the incident and required output wavevectors to be in the same vertical plane. The plots in Fig.~\ref{theoretical_heatmaps_for_45_and_30} show the region (colored yellow) of $(\theta_0, \phi_0)$ for which a grating lobe exists at some $(\theta^*, \phi^*) \neq (\theta_0, \phi_0)$, according to \eqref{conditions_for_0_sll}. The blue regions indicate the absence of grating lobes. Note that $(-\theta, \phi)$ is equivalent to $(\theta, \phi + \pi)$; thus the left half of the plots, where $\theta_0 <0$, denote the reflected beam going in the backward scattering direction. We emphasize that the conditions in \eqref{conditions_for_0_sll} are irrespective of the size of the IRS; it gives the condition for the existence of a grating lobe if the main lobe is formed \emph{exactly} at $(\theta_0, \phi_0)$.

We verify the theoretical plots in Fig.~\ref{theoretical_heatmaps_for_45_and_30} by performing simulations for various arrays, where the optimal weights are obtained using OPA. Fig.~\ref{simulated_45} shows the simulated SLL plots for various array sizes, and they increasingly agree with the theoretical plot in Fig.~\ref{theoretical_heatmaps_for_45_and_30}(a) as the array sizes grow. We observe bands around $\theta_0 = 45^\circ$ in Fig.~\ref{simulated_45}, in addition to certain other small regions of non-zero SLL. This occurs due to beamforming error (defined in section III), which decreases with increasing array size. This can be seen by the bands becoming narrower.

Having identified the existence of grating lobes and confirmed them via simulations, we now seek to propose strategies to mitigate these lobes, as their presence is detrimental to the intended operation of an IRS. As we will demonstrate, the proposed OPA and gOPA algorithms are well suited to implement these mitigation strategies. For the sake of definiteness, we only consider 1-bit IRS implementations in this discussion. We remark that for $k > 2$ phase shifts, \eqref{gl_cond} does not necessarily imply a grating lobe at $(\theta^*, \phi^*)$, and in general a grating lobe does not occur.

\begin{figure}
    \centering
    \includegraphics[width = 0.45 \textwidth]{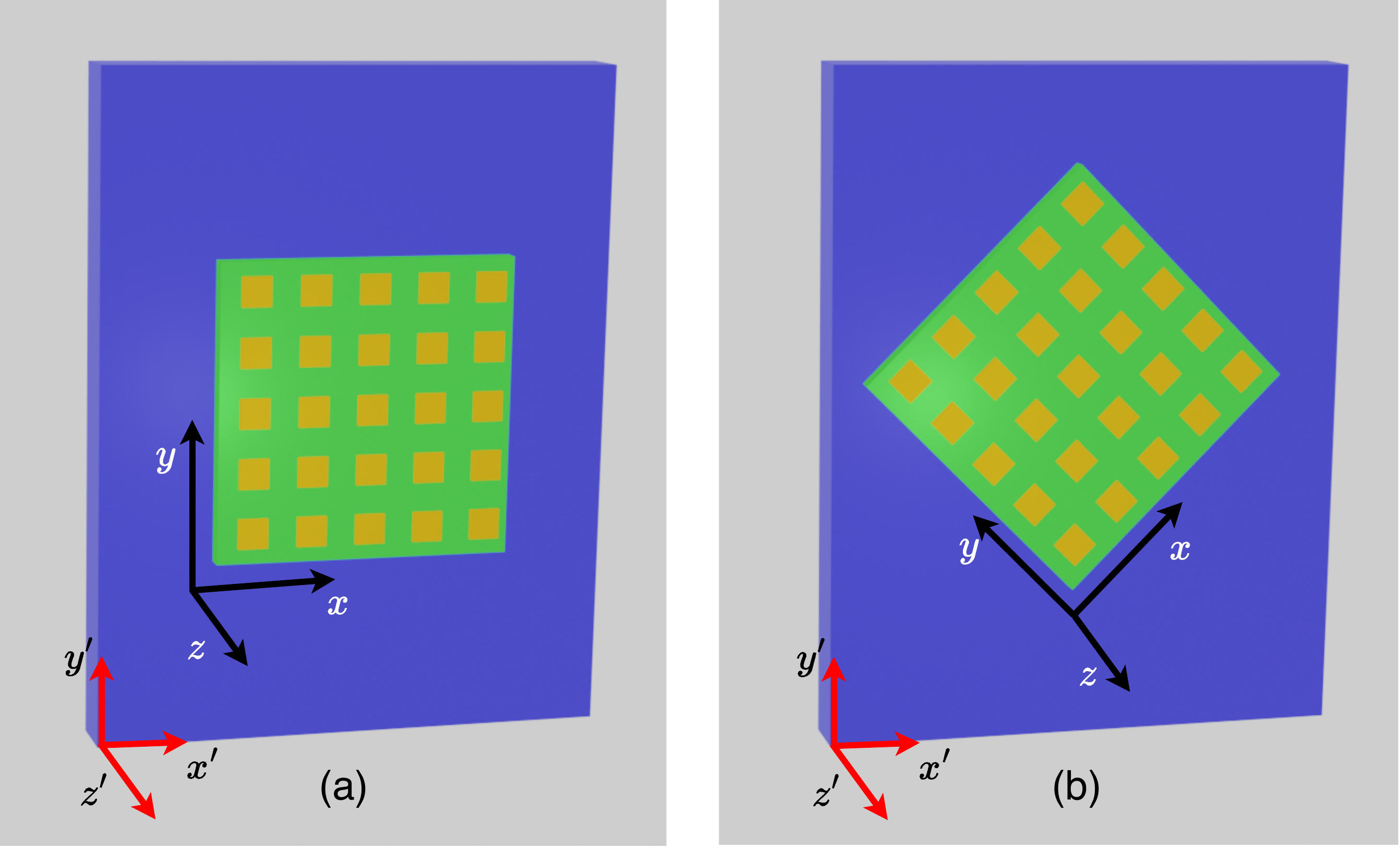}
    \caption{Two different alignment scenarios of a vertically deployed $5\times 5$ IRS  ($x-y$ plane) with respect to the ground ($x'-z'$ plane): (a) $\hat{x},\hat{x}'$ are parallel, and (b) $\hat{x},\hat{x}'$ are at an angle.}
    \label{fig:perspec}
\end{figure}

We consider a vertically oriented IRS (\ie,~with its surface normal parallel to the ground), and an incident beam emitted by a transmitter with the wavevector parallel to the ground. It is desired to beamform to a mobile user in the same horizontal plane, as depicted in Fig.~\ref{fig:perspec}. There arise two scenarios:
\subsection*{Array $x$-axis parallel with respect to the ground}
This amounts to $\phi_0=0^\circ$ and $\phi_{in} = 180^\circ$. As we can see from Fig.~\ref{theoretical_heatmaps_for_45_and_30} for $\phi_0=0$, a grating lobe always exists, regardless of the value of $\theta_{in}$. It is easy to see this theoretically, as shown below. \eqref{conditions_for_0_sll} simplifies to:
\begin{subequations}
\begin{align}
    2 \sin{\theta_{in}} + \sin{\theta_0} + a &= -\sin{\theta^*} \cos{\phi^*} \\
    b &= -\sin{\theta^*} \sin{\phi^*}.
\end{align}    
\end{subequations}
Clearly if $b = 0$ and $a$ is chosen to be $2,0$ or $-2$, depending on whether $(2 \sin{\theta_{in}} + \sin{\theta_0})$ lies in $[-3,-1]$, $[-1,1]$ or $[1,3]$ respectively, then a grating lobe exists for $\phi^* = 0$ and $\theta^* = \arcsin{ ( 2 \sin{\theta_{in}} + \sin{\theta_0} + a ) }$. The lobe doesn't exist only when $\theta^* = \theta_0$, which happens for two particular angles (one of them being $\theta_0 = -\theta_{in}$). Not only does there exist a grating lobe for almost all values of $\theta_0$, the grating lobe also lies along the $xz$ plane, as $\phi^* = 0$.

\subsection*{Array $x$-axis titled with respect to the ground}
When $\phi_{in}\neq 180^\circ$, the issue of grating lobes can be circumvented for some user locations. For e.g.~the red lines in Fig.~\ref{theoretical_heatmaps_for_45_and_30} indicate the maximum possible range of grating-lobe free operation for a few different values of $\theta_{in}$. In particular, we see that if $\theta_{in}=-30^\circ$, a user can move in the entire half-range $\theta_0=[0,90^\circ]$ for a titled IRS with $\phi_0=30^\circ$ and $\phi_{in} = 210^\circ$. Of course, this range is curtailed depending on the size of the array as seen in Fig.~\ref{simulated_45}.

It is often not convenient to tilt an IRS from an implementation perspective, because most deployments are likely to be on walls or sides of buildings, where a horizontal form-factor is more suitable. Therefore, we outline two strategies that mitigating the issue of grating lobes while aligning the IRS axis with the ground.

\subsection{Triangular lattice for IRS layout}
In all the examples considered so far, the layout of the constitutive elements was assumed to be rectangular (e.g.~see Fig.~\ref{fig:IRS_diagram}). Instead, if we construct an IRS by arraying its elements in a triangular lattice, it can be shown that $\phi_0=0^\circ$ and $\phi_{in} = 180^\circ$ becomes viable for beamforming without encountering the issue of grating lobes. 

The array factor is now modified as:
\begin{equation}
 G(\theta, \phi) = \frac{1}{MN}\sum_{m,n} w_{m,n} e^{j\frac{2\pi}{\lambda\,}r_{m,n}^T \big(-\hat{u}(\theta, \phi) + \hat{u}(\pi - \theta_{in}, \phi_{in})\big)},
\end{equation}
where $ \hat{u}(\theta, \phi) = [\sin{\theta} \cos{\phi}\,\, \sin{\theta}\sin{\phi} \,\,\cos{\theta}]^T$ and $r_{m,n} = m d_1 + n d_2$, with $0 \leq n \leq N$ and $ -\lfloor \frac{n}{2} \rfloor \leq m \leq M - \lceil \frac{n}{2} \rceil $. Here, $d_1=d[1\,\,0\,\,0]^T$ and $d_2=d[\frac{1}{2}\,\,\frac{\sqrt{3}}{2}\,\,0]^T$ are the basis vectors for the lattice.

Analogous to the analysis in the case of the rectangular lattice, we can work out the conditions under which a grating lobe appears. The relations for a grating lobe at $(\theta^*,\phi^*)$ are (details in Appendix \ref{triangular_app}):
\begin{subequations}
\label{grating_lobe_condition_for_triangular_lattice}
\begin{align} \nonumber
    -2 \sin{\theta_{in}} \cos{\phi_{in}} + \sin{\theta_0} \cos{\phi_0}& + \frac{\lambda a}{2d} \\
    &\, = -\sin{\theta^*} \cos{\phi^*}, \\  \nonumber
    -2 \sin{\theta_{in}} \sin{\phi_{in}} + \sin{\theta_0} \sin{\phi_0} + &\frac{\lambda}{2d}\Big(  \frac{-a + 2b}{\sqrt{3}}\Big) \\ 
    &\,= -\sin{\theta^*} \sin{\phi^*},
 \end{align}
\end{subequations}
where $a,b$ are even integers, as in \eqref{conditions_for_0_sll}.

\begin{figure}[htb!]
    \centering
    \includegraphics[width = 0.5\textwidth]{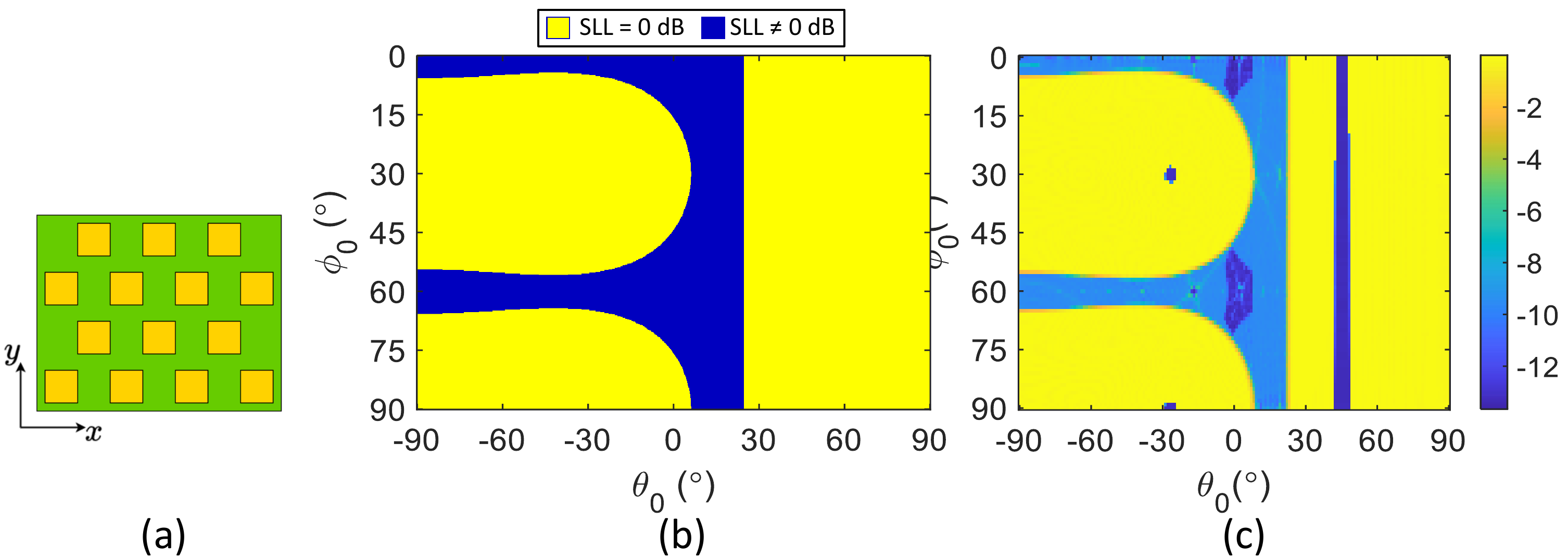}
    \caption{(a) A schematic of a 3×3 triangular lattice (b) Plot showing the existence of grating lobes according to \eqref{grating_lobe_condition_for_triangular_lattice} for varying main lobe directions $(\theta_0, \phi_0)$, when incident and reflected directions lie on the same vertical plane ($\phi_{in} = \phi_0 + 180^\circ $) and $\theta_{in} = -45^\circ$ (c) Corresponding plot indicating the simulated SLL (measured in dB) for varying desired reflected directions $(\theta_0, \phi_0)$, for a $30\times 30$ triangular IRS.}
    \label{triangular_theo_sim}
\end{figure}

Substituting $\phi_{in} = \phi_0 + \pi$ in these relations gives us the theoretical plot as shown in Fig.~\ref{triangular_theo_sim}(a), while Fig.~\ref{triangular_theo_sim}(b) shows the sidelobe levels after the application of OPA for a finite size IRS. A major difference between the SLL plots of the two different lattices is that $\phi_0  = 0^\circ$ gives a wide range of grating-lobe free operation for triangular lattices, which is not the case in rectangular arrays. A detailed theoretical analysis in support is provided in Appendix \ref{triangular_app}. Thus, the triangular lattice enables the alignment of the IRS axis with the ground while mitigating the grating lobe issue.

\subsection{Prephasing technique for elimination of grating lobes}

In the grating lobe considerations considered so far, a non-zero incidence angle was assumed. However, there may be deployment scenarios where keeping $\theta_{in}=0$ is desired, since this opens up the possibility of steering the reflected beam in either quadrant (\ie,~$\theta_0>0$ or $\theta_0<0$). 

However, this presents a fundamental problem, because when: (a) $\theta_{in} = 0$, and (b) the weights are constrained to $\{1, -1\}$, we can show that $|G(\theta, \phi)| = |G(-\theta, \phi)|$ for all $(\theta, \phi)$, irrespective of the type of lattice:
\begin{align}\label{conjugate_G}
    \nonumber G(-\theta, \phi) &= \frac{1}{MN}\sum_{m = 1}^M \sum_{n = 1}^N w_{m,n} e^{j \varphi_{m,n}(- \theta, \phi)} \\
    &= \frac{1}{MN}\sum_{m = 1}^M \sum_{n = 1}^N w_{m,n}^* \cdot  \big( e^{ j \varphi_{m,n}(\theta, \phi) } \big)^* = G(\theta, \phi)^*,
\end{align}
where $^* $ denotes complex conjugation. The key argument we used is that $w_{m,n} = w_{m,n}^*$, since each $w_{m,n} \in \mathbb{R}$. This symmetry means that there will always be a grating lobe for normal incidence. As earlier work \cite{yin_single-beam_2020} has identified, if the primary assumption of $w_{m,n} \in \mathbb{R}$ is attacked (for e.g.~by constraining some of the weights to $\{j, -j\}$), we arrive at a technique of symmetry breaking. This technique is called \emph{prephasing}. 

There are several techniques to implement prephasing. For instance, adding random phase perturbations to each element of the IRS also helps break the symmetry \cite{kashyap_mitigating_2020}.  Another approach \cite{10195171_quantization_suppression} involves shifting subsets of the IRS elements in the normal direction by a small amount. Yet another possibility is to choose a fraction $\kappa$ of tuples from $\{1, \dots, M\} \times \{1, \dots N\}$ (denoting this set as $\mathcal{Q}$), and to then solve the following optimization problem (indeed, \cite{yin_single-beam_2020} does so with $\kappa=0.5$):
\begin{align}
    \label{prephasing_problem}
     \max_{w_{m,n}} \quad &|G(\theta_0, \phi_0)| \\
    \nonumber \textrm{s.t.} \quad &w_{m,n} \in \{j, -j\},\hspace{4mm} \forall (m,n) \in \mathcal{Q}, \\
    \nonumber &w_{m,n} \in \{1,-1\}, \hspace{4mm} \forall (m,n) \notin \mathcal{Q}.
\end{align}
In general, there can be more than two prephases \cite{10179250_prephase_low_res_hybrid_phasing}. Formally, if the prephase set is $\{\psi_1, \dots \psi_p\}$, and if $\{ \mathcal{Q}_i \}_{i=1}^p$ is a partition of the IRS such that each unit cell in $\mathcal{Q}_i$ gets the prephase $\psi_i$, then the optimization problem becomes
\begin{align}
    \label{prephasing_problem_general}
    &\max_{w_{m,n}} \quad |G(\theta_0, \phi_0)| \\ \nonumber 
    &\textrm{s.t. } w_{m,n} \in \{e^{j \psi_i}, e^{j (\psi_i + \pi) }\},\, \forall (m,n) \in \mathcal{Q}_i, \,\forall i = 1,\dots,p.
\end{align}

Remarkably, this problem can be solved by the proposed gOPA, as is evident by a comparison of the above equation with the gOPA formulation in Algorithm 3. Even with the earlier techniques of random phase perturbations \cite{kashyap_mitigating_2020}, the problem of determining the optimum weights for beamforming can be solved by the gOPA.

\begin{figure}[htb!]
    \centering
    \includegraphics[width = 0.5 \textwidth]{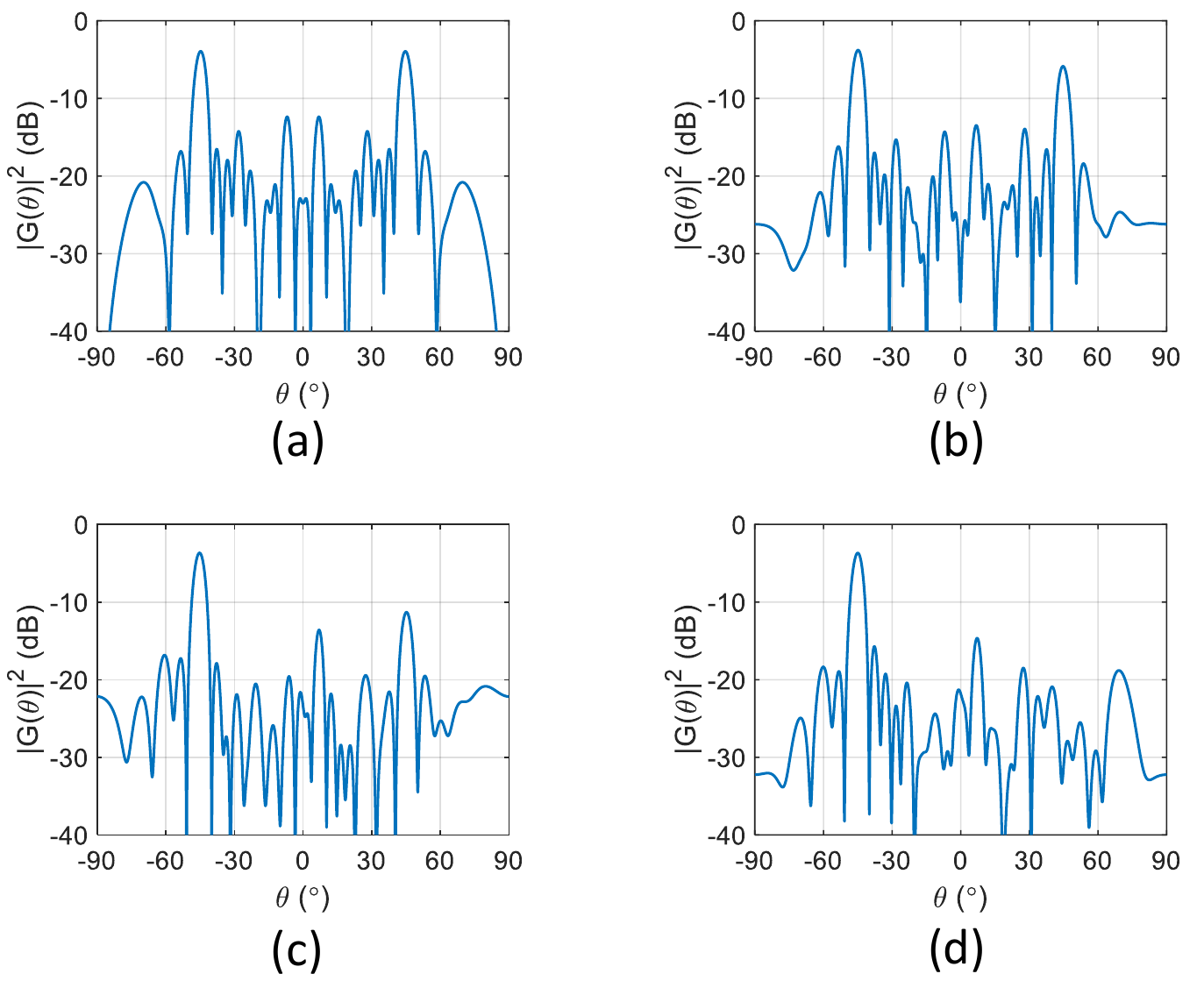}
    \caption{Radiation patterns for $\theta_{in} = 0^\circ$, $\theta_0 = -45^\circ$, $\phi_0 = 0^\circ, \phi_{in} = 180^\circ$ in a 30$\times$30 IRS: (a) $\kappa = 0$ (SLL = $0$ dB) (b) $\kappa = 0.1$ (SLL = $-2$ dB) (c) $\kappa = 0.3$ (SLL = $-7.6$ dB) (d) $\kappa = 0.5$ (SLL = $-10.9 $ dB)}
    \label{fig:radiation_patterns_prephasing}
\end{figure}

We implement the prephasing method as captured in \eqref{prephasing_problem}, and implement the gOPA for various values of the prephasing fraction $\kappa$. For each value of $\kappa $, the set $\mathcal{Q}$ is chosen randomly such that $| \mathcal{Q} | = \kappa MN$. 

Fig.~\ref{fig:radiation_patterns_prephasing} shows the radiation pattern in the $xz$ plane for several values of $\kappa$, with the highest SLL suppression of -10.9 dB achieved for $k=0.5$ in a $30\times 30$ IRS. We also note that while SLL reduces with increasing $\kappa$, the magnitude of the mainlobe remains the same. Thus, prephasing is a viable means of grating lobe mitigation while not sacrificing the mainlobe magnitude.

Next, we confirm the feasibility of scanning a single beam across space for (fixed) normal incidence. Fig.~\ref{fig:prephasing_scanning} shows the radiation patterns along the $xz$-plane, of a $30\times 30$ IRS in the scanning range $[-30^\circ, 30^\circ]$ spaced by intervals of $10^\circ$. The set $\mathcal{Q}$ is chosen randomly, with $\kappa = 0.5$. The worst-case SLL is observed to be $-8.6$ dB. We note that the SLL can be further improved by optimizing the set $\mathcal{Q}$ itself by using evolutionary algorithms \cite{9807634_prephase_1-bit_low-sll}, however at a significant computational cost compared to our proposed method. 

\begin{figure}[htb!]
    \centering
    \includegraphics[width = 0.5 \textwidth]{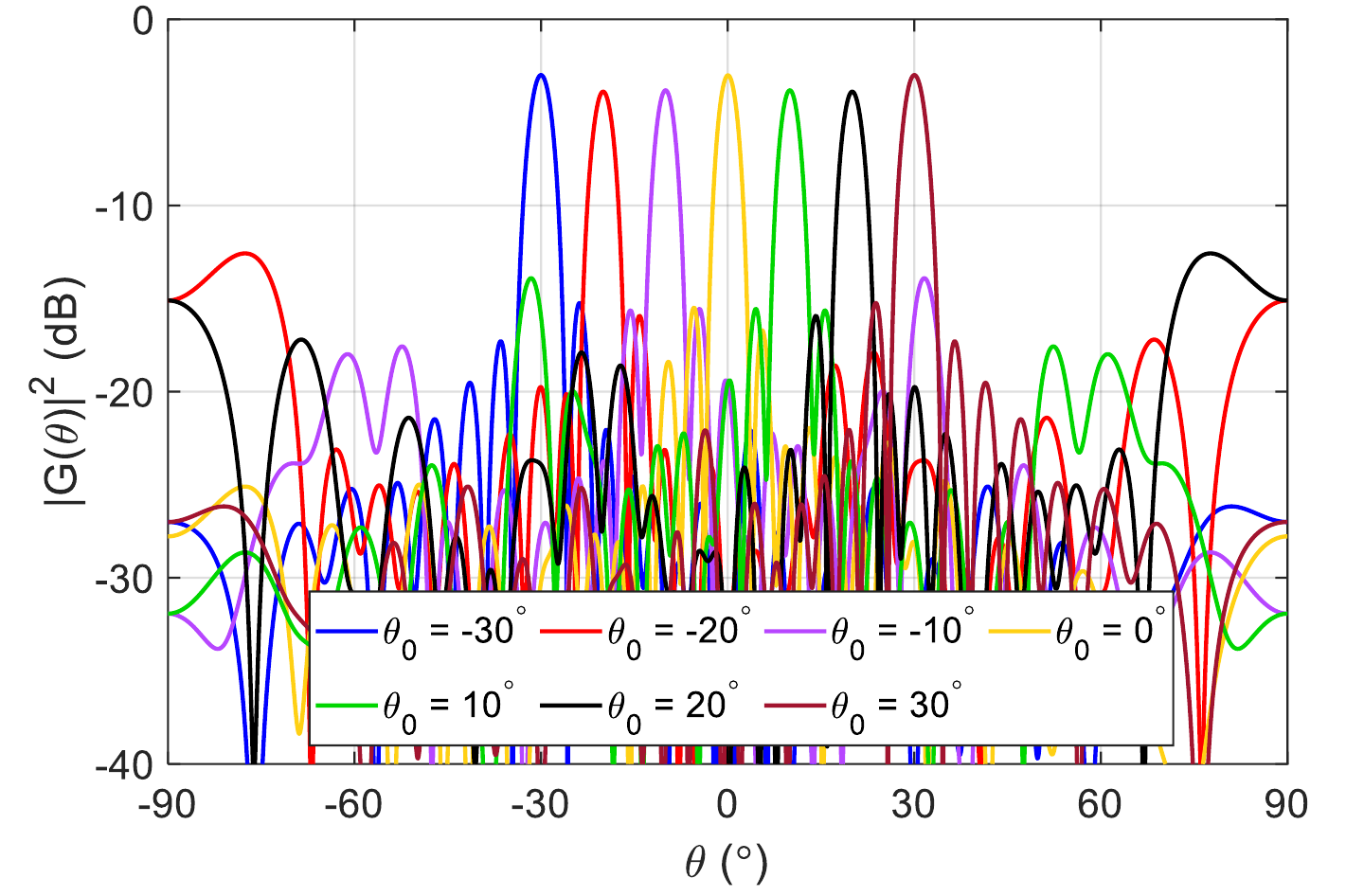}
    \caption{Radiation patterns along the $xz$ plane corresponding to $30\times 30$ IRS operating at normal incidence, for various $\theta_0$}
    \label{fig:prephasing_scanning}
\end{figure}

Finally, we comment on another outcome of the prephasing approach. While originally designed to eliminate the grating lobes issue at normal incidence, it turns out that it offers the opportunity to deploy the IRS at \emph{both} normal or non-normal incidence angle without any issue of grating lobes. We demonstrate this by applying the gOPA on a $30\times 30$ IRS, with the incident and required output wavevectors lying on the same vertical plane ($\phi_{in}=\phi_0 + 180^\circ$) and investigate the behavior w.r.t.~incidence angle after implementing prephasing with $\kappa=0.5$. The results are shown in Fig.~\ref{fig:prephase_heatmap}. We see that there are no grating lobes when operating at normal and non-normal incidence angle (with the exception for small regions at the corners), without having to tilt the IRS axis with respect to the ground.
 
\begin{figure}[htb!]
    \centering
    \includegraphics[width = 0.5 \textwidth]{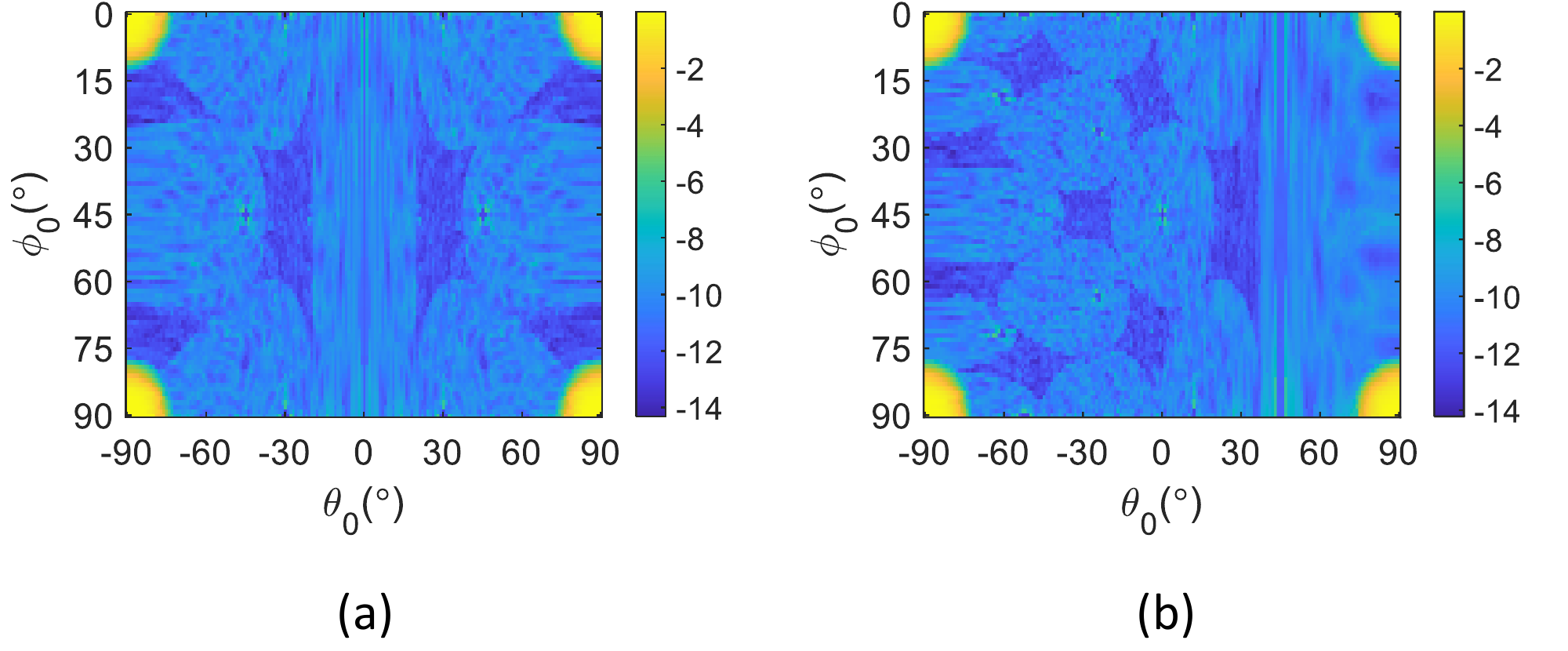}
    \caption{Sidelobe levels (in dB) on an IRS with prephasing on a $30\times 30$ rectangular IRS for different reflected beam angles ($\theta_0,\phi_0$), when incident and required wavevectors lie on the same vertical plane ($\phi_{in}=\phi_0 + 180^\circ$): (a) normal incidence, $\theta_{in}=0$, and (b) non-nomral incidence, $\theta_{in}=-45^\circ$.}
    \label{fig:prephase_heatmap}
\end{figure}

\section{Conclusion}

In this work, we have presented optimal beamforming algorithms in the context of intelligent reflecting surfaces realized with discrete phase shifts, both for the 1-bit case and the general case of $k$ phase shifts ($k > 2$). Our 1-bit algorithm can be adapted to allow each element of the IRS to select its phase from its own set of possible phases, thus being compatible with imperfect hardware realizations. They also outperform naive thresholding-based implementations. We also demonstrate beamforming in multiple directions. Further, our algorithms come with rigorous proofs of optimality as demonstrated in Theorems 1 and 2 -- a feature critically missing in heuristic algorithms. Due to the linear computational complexity of the proposed algorithms, they are extremely fast to deploy in practice, possibly even in real time applications. This makes them an attractive alternative to time-consuming evolutionary algorithms. 

We extensively study the issue of grating lobes and the various situations under which they arise via theory and numerical simulations. We discuss two alleviation strategies to mitigate these grating lobes. The first is the use of a triangular lattice in the IRS, and the second is the well known prephasing technique. We demonstrate the versatility of our algorithms by showing how beamforming can easily be accomplished in the presence of prephasing via extensive numerical studies. That said, further work is required to adapt them to more complex objective functions, such as those incorporating sidelobe level minimization, null beamforming directions, and beamforming over a certain bandwidth. We also assume planewave incidence on the IRS in our formulation, and the approach will require modification in the case of nearfield or spherical wave excitation. Our approach currently ignores the presence of mutual coupling between the reflecting elements, and incorporating this is an item of future research.

\appendices

\section{Proof of Theorem 1}
\label{theorem 1 proof}
Suppose that $\widetilde{w} = \begin{bmatrix}\widetilde{w}_1 & \dots & \widetilde{w}_n \end{bmatrix}$ is an optimal solution to the problem in \eqref{thm1}, and $\widetilde{s} = \sum_{i=1}^n \widetilde{w}_i z_i$. We claim that the line $\ell = \{ z \in \mathbb{C} \mid \operatorname{Re}(\widetilde{s} z^*) = 0 \}$ (\ie, the line perpendicular to $\widetilde{s}$ and passing through the origin) satisfies properties 1 and 2 stated in the theorem. We will require the following lemma.
\begin{lemma}
    Suppose $a$ and $b$ are non-zero complex numbers, such that $\re(a b^*) = \re(b a^*) \geq 0$. Then, $|a+b|^2 > |a|^2$.
\end{lemma}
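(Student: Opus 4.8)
The plan is to expand $|a+b|^2$ using the standard identity for the modulus of a sum of complex numbers and then read off the result directly from the hypothesis.

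First I would write $|a+b|^2 = (a+b)\overline{(a+b)} = |a|^2 + |b|^2 + a\overline{b} + \overline{a}b = |a|^2 + |b|^2 + 2\operatorname{Re}(a b^*)$. The hypothesis $\operatorname{Re}(ab^*) \geq 0$ (the equality $\operatorname{Re}(ab^*) = \operatorname{Re}(ba^*)$ is automatic, since $ba^* = \overline{ab^*}$ and conjugation preserves the real part, so it is not really an extra assumption) then gives $|a+b|^2 \geq |a|^2 + |b|^2$. Finally, since $b$ is non-zero we have $|b|^2 > 0$, so $|a|^2 + |b|^2 > |a|^2$, and chaining the two inequalities yields $|a+b|^2 > |a|^2$.

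There is essentially no obstacle here: the only point requiring the slightest care is that the inequality is \emph{strict}, which is exactly where the non-vanishing of $b$ is used — without it one would only get $\geq$. I would make sure to flag that the non-zero hypothesis on $a$ is not needed for this particular lemma (it is presumably carried along because $a$ will be a partial sum $\sum w_i z_i$ elsewhere in the argument). This lemma will then be the engine of the Theorem 1 proof: taking $a = \widetilde{s}$ and $b = \pm 2\widetilde{w}_i z_i$ for a hypothetical misclassified $z_i$, it shows that flipping that weight strictly increases $|\sum w_j z_j|$, contradicting optimality.
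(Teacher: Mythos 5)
Your proof is correct and is essentially the paper's own argument: expand $|a+b|^2 = |a|^2 + |b|^2 + 2\operatorname{Re}(ab^*)$ and use $\operatorname{Re}(ab^*)\geq 0$ together with $|b|^2>0$ to get the strict inequality. Your side observations (that $\operatorname{Re}(ab^*)=\operatorname{Re}(ba^*)$ is automatic and that the non-vanishing of $a$ is not actually needed) are accurate.
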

\begin{proof}
    We have $|a+b|^2 = |a|^2 + |b|^2 + 2 \re(a b^*) > |a|^2$.
\end{proof}

To prove that $\ell$ satisfies property 1 (\ie~$\ell$ does not contain any $z_i$), assume for contradiction that it does not. Then, there exists some $j$ for which $z_j \in \ell$. Now, by switching $w_j = \widetilde{w}_j \rightarrow w_j = - \widetilde{w}_j $, we obtain the new sum $s = \sum_{i=1}^n w_i z_i = \widetilde{s} - 2 \widetilde{w}_j z_j$. Noting that $\operatorname{Re}(\widetilde{s} \cdot (2 \widetilde{w}_j z_j)^*) = 2 \widetilde{w}_j \operatorname{Re}(\widetilde{s} z_j^*) = 0$, by Lemma 1 we have $|s|^2 > |\widetilde{s}|^2$, which contradicts the optimality of $\widetilde{w}$. Therefore, property 1 has to be satisfied by $\ell$. 

To prove that $\ell$ satisfies property 2 (\ie~$\widetilde{w}_i = 1$ for all $z_i$ lying on one side of the line, and $\widetilde{w}_i = -1$ for all $z_i$ on the other side), we note that $\ell$ separates the complex plane into two half-planes given by: 
\begin{equation*}
\mathcal{H}_1 = \{ z \in \mathbb{C} \mid \operatorname{Re}(\widetilde{s} z^*) > 0 \}, \hspace{1.5mm}\mathcal{H}_2 = \{ z \in \mathbb{C} \mid \operatorname{Re}(\widetilde{s} z^*) < 0 \}.
\end{equation*}
We claim that $\widetilde{w}_j = 1$ for every $z_j \in \mathcal{H}_1$. To prove this, suppose the contrary that $\widetilde{w}_j = -1$ for some $z_j \in \mathcal{H}_1$. Then, by switching $w_j = \widetilde{w}_j = -1 \rightarrow w_j = 1$, we obtain the new sum $s = \sum_{i=1}^n w_i z_i = \widetilde{s} + 2 z_j$, and noting that $\operatorname{Re}(\widetilde{s} z_j^*) > 0$, we have $|s|^2  > |\widetilde{s}|^2$, contradicting the optimality of $\widetilde{w}_j$. A similar line of argument can be used to show that $\widetilde{w}_j = -1$ for every $z_j \in \mathcal{H}_2$. Hence, property 2 also holds.

\section{Proof of Theorem 2}
\label{theorem_2_proof}
Let $\widetilde{s} = \sum_{i=1}^n \widetilde{w}_i z_i$, where $\{\widetilde{w}_i\}$ represent the optimal solution to the problem in \eqref{thm2}, and define $\delta = \arg{ \widetilde{s}}$. We will show that this $\delta$ satisfies the required properties.

To prove that property 1 holds (\ie~none of the $z_i$ lie on the edges of the cones of the rotated partition  $\mathcal{P}(\delta)$), suppose for contradiction that some $z_i$ lies on the edge of two cones $A_j(\delta)$ and $A_l(\delta) \Longleftrightarrow e^{-j \delta} z_i$ lies on the edge of $A_j$ and $A_l$. Being a limit point of both $A_j$ and $A_l$, we have $\re( (a_j - a_m) z_i e^{-j \delta}) \geq 0$ $\forall$ $m \neq j$ (for $A_j(\delta)$), and similarly $\re( (a_l - a_m) z_i e^{-j \delta}) \geq 0$ $\forall$ $m \neq l$ (for $A_l(\delta)$). Together, this implies: $\operatorname{Re}( (a_j - a_l)e^{-j \delta} z_i) = 0$. We now have two cases:
\begin{enumerate}
    \item Suppose $\widetilde{w}_i \neq a_j$. Then, by switching from $w_i = \widetilde{w}_i$ to $w_i = a_j$, we obtain the new sum $s = \widetilde{s} + (a_j - \widetilde{w}_i) z_i$. Since $\re((a_j - \widetilde{w}_i) z_i \widetilde{s}^*) = |\widetilde{s}| \re( (a_j - \widetilde{w}_i) z_i e^{-j \delta}) \geq 0$, by Lemma 1 we have $|s|^2 > |\widetilde{s}|^2$, contradicting the optimality of $\widetilde{w}$.

    \item Suppose $\widetilde{w}_i = a_j$. Then, by switching from $w_i = a_j$ to $w_i = a_l$, we obtain the new sum $s = \widetilde{s} + (a_l - a_j) z_i$. Since $\re((a_l - a_j) z_i \widetilde{s}^*) = |\widetilde{s}| \re( (a_l - a_j) z_i e^{-j \delta}) = 0$, by Lemma 1 we have $|s|^2 > |\widetilde{s}|^2$, once again contradicting the optimality of $\widetilde{w}$.
    
\end{enumerate}
To prove property 2 holds (\ie~if  $z_i$  belongs to the cone $A_j(\delta)$ in the rotated partition $\mathcal{P}(\delta)$, then     $\widetilde{w}_i = a_{j}$), let $z_i \in A_j(\delta) \Longleftrightarrow e^{-j \delta} z_i \in A_j$ and suppose for contradiction that $\widetilde{w}_i \neq a_j$. Then, by switching from $w_i = \widetilde{w}_i$ to $w_i = a_j$, we obtain the new sum $s = \widetilde{s} + (a_j - \widetilde{w}_i) z_i$. Since $\re((a_j - \widetilde{w}_i) z_i \widetilde{s}^*) = |\widetilde{s}| \re( (a_j - \widetilde{w}_i) z_i e^{-j \delta}) > 0$ from the definition of $A_j$, by Lemma 1 we have $|s|^2 > |\widetilde{s}|^2$, contradicting the optimality of $\widetilde{w}$. Thus, $\widetilde{w}_i = a_j$, and hence property 2 holds.

\section{Derivation of Grating lobe condition for triangular lattice}
\label{triangular_app}
Define the matrix $D = \begin{bmatrix}
    d_1 & d_2
\end{bmatrix}$.
Then, we can write $x_{m,n} = D \begin{bmatrix}
    m & n
\end{bmatrix}^T$ for all $m,n$. 
Noting that $\varphi_{m,n}(\theta, \phi) = \frac{2 \pi}{\lambda}x_{m,n}^T \big(-\hat{u}(\theta, \phi) + \hat{u}(\pi - \theta_{in}, \phi_{in})\big)$, the grating lobe condition \eqref{gl_cond} becomes
\begin{multline}
   \frac{2 \pi}{\lambda}\begin{bmatrix}
        m & n
    \end{bmatrix} D^T \big( -\hat{u}(\theta^*, \phi^*) - \hat{u}(\theta_0, \phi_0) + 2 \hat{u}(\pi - \theta_{in}, \phi_{in}) \big) \\
    = \pi \begin{bmatrix}
        m & n
    \end{bmatrix}\begin{bmatrix}
        a & b
    \end{bmatrix}^T. 
\end{multline}
Since the above condition holds for all $m,n$, we must necessarily have:
\begin{align}
    \label{gl_eqn1_appendix}
    &\frac{2 \pi}{\lambda} D^T \big(- \hat{u}(\theta^*, \phi^*) - \hat{u}(\theta_0, \phi_0) + 2 \hat{u}(\pi - \theta_{in}, \phi_{in}) \big)
    = \pi \begin{bmatrix}
        a & b
    \end{bmatrix}^T.
\end{align}
Define $v = \begin{bmatrix}
      -\sin{\theta_0} \cos{\phi_0} - \sin{\theta^*} \cos{\phi^*} + 2 \sin{\theta_{in}} \cos{\phi_{in}} \\
      -\sin{\theta_0} \sin{\phi_0} - \sin{\theta^*} \sin{\phi^*} + 2 \sin{\theta_{in}} \sin{\phi_{in}}
 \end{bmatrix}$. \\
 Then, \eqref{gl_eqn1_appendix} becomes
\begin{align}   
    \label{gl_eqn2_appendix}
      &\frac{2 \pi d}{\lambda} \begin{bmatrix}
         1 & 0 \\ \frac{1}{2} & \frac{\sqrt{3}}{2}
     \end{bmatrix} v = \pi \begin{bmatrix}
         a \\ b
     \end{bmatrix} 
     \implies v = \frac{\lambda}{2d} \begin{bmatrix}
         a \\ \frac{-a+2b}{\sqrt{3}}
     \end{bmatrix}.
 \end{align}
Comparing the components on both sides of \eqref{gl_eqn2_appendix} gives us the desired \eqref{grating_lobe_condition_for_triangular_lattice}.
\subsection*{Analysis for $\phi_0 = 0^\circ, \phi_{in} = 180^\circ$}
The grating lobe condition now reduces to the following system of equations
\begin{subequations}
    \begin{align}
    2 \sin{\theta_{in}} + \sin{\theta_0} + a &= -\sin{\theta^*} \cos{\phi^*} \label{gl_eqn3a_appendix}, \\
    \frac{-a + 2b}{\sqrt{3}} &= -\sin{\theta^*} \sin{\phi^*}. \label{gl_eqn3b_appendix}
\end{align}
\end{subequations}
Since $\sin{\theta^*} \sin{\phi^*} \in [-1, 1]$ and $-a+2b$ is even, \eqref{gl_eqn3b_appendix} forces $a = 2b$. Since $b$ is even, $a$ is a multiple of $4$. However, \eqref{gl_eqn3a_appendix} forces $a \in \{-2,0,2 \}$, because $\sin{\theta^*} \cos{\phi^*} \in [-1, 1]$ and $2 \sin{\theta_{in}} + \sin{\theta_0} \in (-3, 3) $. Therefore, we must have $a = b = 0$, and hence \eqref{gl_eqn3a_appendix} and \eqref{gl_eqn3b_appendix} reduce to $2 \sin{\theta_{in}} + \sin{\theta_0} = -\sin{\theta^*} \cos{\phi^*}$ and $0 = \sin{\theta^*} \sin{\phi^*}$ respectively.

Now, if $- 2 \sin{\theta_{in}} - \sin{\theta_0} > 1 $ (equivalently, $\theta_0 < \arcsin{( - 2 \sin{\theta_{in}} - 1)} $), then clearly \eqref{gl_eqn3a_appendix} cannot be satisfied by any $(\theta^*, \phi^*)$, and hence a grating lobe will not exist. In particular, for $\theta_{in} = -45^\circ$, a grating lobe will not exist when $\theta_0 < \arcsin{(\sqrt{2}-1)} \approx 24.5^\circ$, which is observed in Fig.~\ref{triangular_theo_sim}.

\section{Proof of theorem 3}
\label{theorem_3_proof}
We recall the following relations mentioned in the theorem:
\begin{itemize}
       \item $w^{(k)} = \arg \max_{w_i} |G_1 + \sum_{j=2}^l \alpha_j^{(k)} G_j|$.
       \item $\alpha_j^{(k+1)} = e^{j \big( \arg(G_1) - \arg(G_j) \big)}$ for $j = 2, \hdots, l$, where each $G_j$ is computed using the weights $w^{(k)}$.
   \end{itemize}
We defined $c_k = \max_{w_i} |G_1 + \sum_{j=2}^l \alpha_j^{(k)} G_j|$, and $d_k = |G_1(w^{(k)} )| + \hdots + |G_l(w^{(k)})|$. Our goal is to show that the sequences $\{c_k\}_{k \geq 1}$ and $\{d_k\}_{k \geq 1}$ are both non-decreasing.

Now, we have $c_k = \max_{w_i} |G_1 + \sum_{j=2}^l \alpha_j^{(k)} G_j| = |G_1(w^{(k)}) + \sum_{j=2}^l \alpha_j^{(k)} G_j(w^{(k)}) |$ by definition of $w^{(k)}$. It follows from the triangle inequality that $c_k \leq d_k$.

Next, by definition of $\alpha_j^{(k+1)}$, we have $\arg{\big( G_1(w^{(k)}) \big)} = \arg{\big( \alpha_j^{(k+1)} G_j(w^{(k)}) \big)}$ for all $j \in \{2, \hdots, l\}$, and hence we have
\begin{align}
    \label{th4_eq1}
    |G_1(w^{(k)}) + \sum_{j=2}^l \alpha_j^{(k+1)} G_j(w^{(k)}) | &= \sum_{j=1}^l |G_j(w^{(k)} )| = d_k.
\end{align}
However, by the definition of $c_{k+1}$, we have
\begin{align}
    c_{k+1} &\geq |G_1(w^{(k)}) + \sum_{j=2}^l \alpha_j^{(k+1)} G_j(w^{(k)}) |,
\end{align}
and hence, from \eqref{th4_eq1}~we have $c_{k+1} \geq d_k$. Once again, we have $c_{k+1} \leq d_{k+1}$ due to the triangle inequality. Therefore, we have the chain of inequalities $c_k \leq d_k \leq c_{k+1} \leq d_{k+1}$ for all $k \geq 1$, which in particular implies that both the sequences $\{c_k\}_{k \geq 1}$ and $\{d_k\}_{k \geq 1}$ are non-decreasing, as desired.
\section*{Acknowledgment}
 The authors acknowledge useful discussions on beamforming with Dr.~Hanumantha Rao, Director General of SAMEER, India, Prof.~S.~Aniruddhan, Electrical Engineering, IIT Madras, and Mr. Amitesh Kumar of SAMEER, India.
 
\bibliographystyle{ieeetr}
\bibliography{refs}

\begin{IEEEbiographynophoto}{Sai Sanjay Narayanan}
is currently an undergraduate student in the Engineering Physics program at the Indian Institute of Technology Madras, Chennai.
\end{IEEEbiographynophoto}

\begin{IEEEbiographynophoto}{Uday K Khankhoje}
(Senior Member, IEEE) received the B.Tech.~degree in Electrical Engineering from the Indian Institute of Technology Bombay, Mumbai, in 2005, and the M.S.~and Ph.D.~degrees in Electrical Engineering from the California
Institute of Technology, Pasadena, in 2010. He was a Caltech Postdoctoral Scholar at the Jet Propulsion Laboratory (NASA/Caltech) from 2011-12, a Postdoctoral Research Associate in the Department of Electrical Engineering at the University of Southern California, Los Angeles, USA, from 2012-13, and an Assistant Professor of Electrical Engineering at the Indian Institute of Technology Delhi, India from 2013-16. Since 2016 he has been at the Department of Electrical Engineering in the Indian Institute of Technology Madras, Chennai, where he is currently an Associate Professor and leads the numerical electromagnetics and optics (NEMO) research group which focuses on solving electromagnetics inspired inverse problems.
\end{IEEEbiographynophoto}

\begin{IEEEbiographynophoto}{Radha Krishna Ganti}
(Member, IEEE) received the B.Tech. and M.Tech. degrees in Electrical Engineering from the Indian Institute
of Technology Madras, Chennai, India in 2004, and the Master’s degree in Applied Mathematics and the Ph.D. degree in Electrical Engineering from the University of Notre Dame in 2009. He is currently a Professor of Electrical Engineering at the Indian Institute of Technology Madras, Chennai. He is a coauthor of the monograph,\textit{ Interference in Large Wireless Networks} (NOW Publishers, 2008). His doctoral work focused on the spatial analysis of interference networks using tools from stochastic geometry. He received the 2014 IEEE Stephen O. Rice Prize, the 2014 IEEE Leonard G. Abraham Prize, and the 2015 IEEE Communications Society Young Author Best Paper Award. He was also awarded the 2016–2017 Institute Research and Development Award (IRDA) by IIT Madras. In 2019, he was awarded the TSDSI Fellow for Technical Excellence in standardization activities and contribution to LMLC use case in ITU. He was the Lead PI from IITM involved in the development of 5G base stations for the 5G Testbed Project funded by the Department of Telecommunications, Government of India.
\end{IEEEbiographynophoto}
\end{document}